\newcommand{\mc}[1]{\mathcal{#1}}
\newcommand{\mbb}[1]{\mathbb{#1}}
\newcommand{\msf}[1]{\mathsf{#1}}
\newcommand{\defeq}{\triangleq}
\newcommand{\Pp}{\mathbb{P}}
\newcommand{\Z}{\mathbb{Z}}
\newcommand{\Zw}{\mathbb{Z}[\omega]}
\newcommand{\Zi}{\mathbb{Z}[i]}
\newcommand{\iid}{i.\@i.\@d.\ }
\newcommand{\VOL}{\text{Vol}}
\theoremstyle{definition}\newtheorem{lemma}{Lemma}
\theoremstyle{definition}\newtheorem{proposition}[lemma]{Proposition}
\theoremstyle{definition}\newtheorem{theorem}[lemma]{Theorem}
\theoremstyle{definition}\newtheorem{corollary}[lemma]{Corollary}
\newtheorem{define}[lemma]{Definition}
\newtheorem{example}[lemma]{Example}
\newtheorem{remark}[lemma]{Remark}
\begin{document}
\title{Construction $\pi_A$ and $\pi_D$ Lattices: Construction, Goodness, and Decoding Algorithms}
\author{Yu-Chih Huang, \emph{Member, IEEE}, and Krishna R. Narayanan, \emph{Fellow, IEEE}
\thanks{The work of Y.-C. Huang was supported by the Ministry of Science and Technology, Taiwan, under Grant MOST 104-2218-E-305-001-MY2. The work of K. R. Narayanan was supported by the National Science Foundation under grant CCF-1302616. This paper was presented in part at the 2014 International Symposium on Information Theory \cite{huangisit14}, the 2014 Information Theory Workshop \cite{huang14ITW}, and the 2015 International Conference on Telecommunications \cite{huang15ict}.}

\thanks{Y.-C. Huang is with the Department of Communication Engineering, National Taipei University, 237 Sanxia District, New Taipei City, Taiwan (email: ychuang@mail.ntpu.edu.tw).}

\thanks{K. R. Narayanan is with the Department of Electrical and Computer Engineering, Texas A\&M University, College Station, TX 77843, USA (email: krn@tamu.edu).}
\thanks{Copyright (c) 2017 IEEE. Personal use of this material is permitted.  However, permission to use this material for any other purposes must be obtained from the IEEE by sending a request to pubs-permissions@ieee.org.}
}

\maketitle

\begin{abstract}
A novel construction of lattices is proposed. This construction can be thought of as a special class of Construction A {\black from codes over finite rings} that can be represented as the Cartesian product of $L$ linear codes over $\mathbb{F}_{p_1},\ldots,\mathbb{F}_{p_L}$, {\black respectively, and hence is referred to} as Construction $\pi_A$. The existence of a sequence of such lattices that is good for channel coding (i.e., Poltyrev-limit achieving) under multistage decoding is shown. {\black A new family of multilevel nested lattice codes based on Construction $\pi_A$ lattices is proposed and its achievable rate for the additive white Gaussian channel is analyzed}. A generalization named Construction $\pi_D$ is also investigated which subsumes Construction A with codes over prime fields, Construction D, and Construction $\pi_A$ as special cases.
\end{abstract}


\section{Introduction}
Lattices and codes based on lattices have been considered as one of the potential transmission schemes for point-to-point communications for decades.
Consider the additive white Gaussian (AWGN) channel
\begin{equation}
    \mathbf{y}=\mathbf{x}+\mathbf{z},
\end{equation}
where $\mathbf{y}$ is the received signal, $\mathbf{x}$ is the transmitted signal with input power constraint $P$, and $\mathbf{z}\sim\mc{N}(0,\eta^2\mathbf{I})$. In \cite{deBuda}, de Buda showed that one can use lattices shaped by a proper thick shell to reliably communicate under rates arbitrarily close to the channel capacity $\frac{1}{2}\log\left(1+\text{SNR}\right)$ bits/channel where $\text{SNR}\defeq P/\eta^2$ represents the signal-to-noise ratio. This result was then corrected by Linder \textit{et al.} \cite{linder93} which states that with de Buda's approach, only those lattice points {\black that} lie inside a thin spherical region are allowed to be used in order to achieve the rates promised by de Buda, which destroys the desired lattice structure. Urbanke and Rimoldi \cite{urbanke98} then showed that lattice codes with the minimum angle decoder can achieve the channel capacity. On the other hand, for lattice codes with lattice decoding, it was long believed that they can only achieve $\frac{1}{2}\log\left(\text{SNR}\right)$ bits/channel \cite{loeliger97}. In \cite{erez04}, Erez and Zamir finally showed that lattice codes can achieve the channel capacity with lattice decoding with the help of nested lattice shaping and an MMSE estimator at the receiver. Erez and Zamir's coding scheme is based on sequences of nested lattices that are constructed by Construction A \cite{LeechSloane71} \cite{conway1999sphere}\footnote{\black Here, the term ``Construction A" is used to represent Construction A with codes over prime fields. Later on, we will define Construction A with codes over rings. However, throughout the paper, for the sake of conciseness, we use this term to represent Construction A with codes over prime fields unless otherwise specified.}.

Recently, lattices have been adopted to many problems in network communications and their benefits have gone beyond merely practical aspects \cite{zamir_book}. In many networks (e.g. \cite{bresler10, nazer07GMAC, philosof11, wilson10, nazer2011CF}), it has been shown that the lattice structure enables one to exploit the structural gains induced by the channels and hence achieve higher rates than that provided by random codes. In most of these examples, the coding schemes are based on the random ensemble of nested lattice codes from Construction A by Erez and Zamir \cite{erez04}. {\black On one hand, this ensemble of nested lattice codes is known for its ability of producing capacity-achieving lattice codes and its structure which is suitable for many problems in network communications. On the other hand, decoding of a Construction A lattice typically depends on decoding the underlying linear code implemented over a prime field whose size has to be large in order to have a good lattice code \cite{erez04} \cite{erez05}.} This results in a large decoding complexity for lattices and codes based on them.

To alleviate this drawback, in this paper, we propose a novel lattice construction called Construction $\pi_A$ (previously called product construction in \cite{huang13ITW} a precursor of this paper) that can be thought of as a generalization of Construction A to codes which can be represented as the Cartesian product of $L$ linear codes over different prime fields $\mbb{F}_{p_1}$, $\ldots$, $\mbb{F}_{p_L}$. This generalization is enabled by a ring isomorphism between the product of prime fields and the quotient ring $\left(\mbb{Z}/\Pi_{l=1}^L p_l\mbb{Z}\right)$ which is guaranteed by the Chinese remainder theorem (CRT). Due to the multilevel nature, the Construction $\pi_A$ lattices admit multistage decoding which decodes the coset representatives level by level. This construction is then shown to be able to produce lattices that are good for channel coding (Poltyrev-limit achieving) under multistage decoding. This allows one to achieve the Poltyrev-limit with a substantially lower decoding complexity as now the complexity is dominated by the code over the prime field with the largest size rather than the product of them.



{\black Construction $\pi_A$ lattices are then adopted for communication over the AWGN channel. Following \cite{loeliger97}, we show the existence of lattice codes with sphere shaping that can achieve $\frac{1}{2}\log(\text{SNR})$ bits/channel with multistage decoding. We also tailor a recent construction of nested lattice codes by Ordentlich and Erez \cite{ordentlich_erez_simple} specifically for our Construction $\pi_A$ lattices. For such lattice codes, an isomorphism between lattice codewords and messages is guaranteed and can be easily identified. The achievable rate of the proposed multilevel nested lattice codes under multistage decoding is then analyzed. It is shown that with hypercube shaping, the proposed nested lattice codes only suffer from 1.53 dB SNR loss in shaping gain. This gap can be further reduced if there exists Construction $\pi_A$ lattices that can provide better shaping than that of hypercube shaping.}

Lattices generated by the Construction $\pi_A$ preserve most of the structure of Construction A lattices with codes over prime fields and hence can be applied to most of the applications using lattices from Construction A. However, there are some subtle differences between these constructions that may have a bearing on the application at hand. For e.g., lattices built from Construction $\pi_A$ appear to be ideally suited for lattice index coding \cite{natarajan15_LIC_ITW} \cite{natarajan14_LIC} and more so than other known constructions. Lattices from Construction $\pi_A$ have been considered for the compute-and-forward paradigm \cite{nazer2011CF} in \cite{huangisit14}. In this case, the set of integer combinations that can be decoded and forwarded may be smaller than those from Construction A since in effect the modulo operation at the relay is only over a ring instead of over a prime field.

We also provide a generalization of Construction $\pi_A$ to codes over rings. This generalization is called Construction $\pi_D$ and subsumes Construction A, Construction D \cite{BarnesSloane83} \cite[Page 232]{conway1999sphere}, and Construction $\pi_A$ as special cases. The main idea which allows this generalization is from the observation made in \cite{feng11rings} indicating the connection between Construction D and Construction A with codes over rings.

\subsection{Organization}
The paper is organized as follows. In Section~\ref{sec:prelim}, some background on lattices and algebra are provided together with a review and discussion about Construction A lattices. In Section~\ref{sec:prod_const}, we present the Construction $\pi_A$ lattices and show that such construction can produce good lattices. A detailed comparison between these lattices and Construction D lattices is provided in Section~\ref{sec:compare_D}. We then propose in Section~\ref{sec:const_pi_D} a generalization of the Construction $\pi_A$ lattices, which we refer to as the Construction $\pi_D$ lattices. Discussions about the Construction $\pi_A$ lattices are provided in Section~\ref{sec:discussion} followed by the proposed efficient decoding algorithms in Section~\ref{sec:para_dec}. In Section~\ref{sec:AWGN}, we consider using the Construction $\pi_A$ lattices for point-to-point communication over AWGN channel and propose a novel ensemble of nested multilevel lattice codes that can achieve the capacity under multistage decoding. Section~\ref{sec:conclusion} concludes the paper.

\subsection{Notations}
Throughout the paper, we use $\mbb{N}$, $\mbb{R}$, and $\mbb{C}$ to represent the set of natural numbers, real numbers, and complex numbers, respectively. $\mbb{Z}$, $\Zi$, and $\Zw$ are the rings of integers, Gaussian integers, and Eisenstein integers, respectively. We use $i\defeq \sqrt{-1}$ to denote the imaginary unit and define $\omega\defeq -\frac{1}{2}+i\frac{\sqrt{3}}{2}$. We use $\Pp(E)$ to denote the probability of the event $E$. Vectors and matrices are written in lowercase boldface and uppercase boldface, respectively. Random variables are written in Sans Serif font, {\black for example $\msf{X}$}. We use $\times$ to denote the Cartesian product and use $\oplus$ and $\odot$ to denote the addition and multiplication operations, respectively, over a finite ring/field where the ring/field size can be understood from the context if it is not specified.

\section{Preliminaries}\label{sec:prelim}
In this section, we briefly summarize background knowledge on lattices followed by some preliminaries on abstract algebra. For more details about lattices, lattice codes, and nested lattice codes, the reader is referred to \cite{erez04} \cite{erez05} \cite{conway1999sphere}. We then summarize the famous Construction A lattices.

\subsection{Lattices}
An $N$-dimensional lattice $\Lambda$ is a discrete subgroup of $\mathbb{R}^N$ which is closed under reflection and ordinary vector addition operation. i.e., $\forall \boldsymbol\lambda\in\Lambda$, we have $-\boldsymbol\lambda\in\Lambda$, and  $\forall \boldsymbol\lambda_1, \boldsymbol\lambda_2\in \Lambda$, we have $\boldsymbol\lambda_1 + \boldsymbol\lambda_2 \in \Lambda$. Some important operations and notions for lattices are defined as follows.
\begin{define}[Lattice Quantizer]
For a $\mathbf{x}\in\mathbb{R}^N$, the nearest neighbor quantizer associated with $\Lambda$ is denoted as
\begin{equation}
    Q_{\Lambda}(\mathbf{x})=\boldsymbol\lambda\in\Lambda;~\|\mathbf{x}-\boldsymbol\lambda\|\leq\|\mathbf{x}-\boldsymbol\lambda'\|~\forall\boldsymbol\lambda'\in\Lambda,
\end{equation}
where $\| .\|$ represents the $L_2$-norm operation and {\black the ties are broken arbitrarily}.
\end{define}

\begin{define}[Fundamental Voronoi Region]
The fundamental Voronoi region $\mathcal{V}_{\Lambda}$ is defined as
\begin{equation}
    \mathcal{V}_{\Lambda}=\{ \mathbf{x}: Q_{\Lambda}(\mathbf{x})=\mathbf{0} \}.
\end{equation}
\end{define}

\begin{define}[Modulo Operation]
The $\hspace{-3pt}\mod \Lambda$ operation returns the quantization error with respect to $\Lambda$ and is represented as
\begin{equation}
    \mathbf{x}\hspace{-3pt}\mod \Lambda = \mathbf{x}-Q_{\Lambda}(\mathbf{x}).
\end{equation}
\end{define}

The second moment of a lattice is defined as the average energy per dimension of a uniform probability distribution over $\mathcal{V}_{\Lambda}$ as
\begin{equation}
    \sigma^2(\Lambda) = \frac{1}{\text{Vol}(\mc{V}_{\Lambda})}\frac{1}{N}\int_{\mathcal{V}_{\Lambda}}\| \mathbf{x} \|^2 \mathrm{d}\mathbf{x},
\end{equation}
where $\text{Vol}(\mc{V}_{\Lambda})$ is the volume of $\mathcal{V}_{\Lambda}$. The normalized second moment of the lattice is then defined as
\begin{equation}
    G(\Lambda) = \frac{\sigma^2(\Lambda)}{\text{Vol}(\mc{V}_{\Lambda})^{2/N}},
\end{equation}
which is lower bounded by that of a sphere which asymptotically approaches $\frac{1}{2\pi e}$ in the limit as $N\rightarrow \infty$. Note that $G(\Lambda)$ is invariant to scaling.

We now define two important notions of goodness for lattices.
\begin{define}[Goodness for MSE Quantization]
    We say that a sequence of lattices is asymptotically good for MSE quantization if
    \begin{equation}
        \underset{N\rightarrow\infty}{\lim} G(\Lambda) = \frac{1}{2\pi e}.
    \end{equation}
\end{define}
Consider the unconstrained AWGN channel $\mathbf{y}=\mathbf{x}+\mathbf{z}$ where $\mathbf{x}$, $\mathbf{y}$, and $\mathbf{z}\sim \mathcal{N}(0,\eta^2\cdot I)$ represent the transmitted signal, the received signal, and the noise, respectively. Moreover, let $\mathbf{x} \in \Lambda$ and let there be no power constraint on $\mathbf{x}$ so that any lattice point could be sent.
\begin{define}[Goodness for Channel Coding]
    We say that a sequence of lattices is asymptotically good for channel coding if whenever
    \begin{equation}\label{eqn:poltyre_good}
        \eta^2<\frac{\text{Vol}(\mc{V}_{\Lambda})^{\frac{2}{N}}}{2\pi e},
    \end{equation}
    the error probability of decoding $\mathbf{x}$ from $\mathbf{y}$ can be made arbitrarily small as $N$ increases.
\end{define}
Here, by goodness for channel coding, we particularly mean a sequence of lattices that approach the Poltyrev limit defined in \eqref{eqn:poltyre_good}. There is a stronger version of Poltyrev-goodness stating that the sequence of lattices achieves an error exponent lower bounded by the Poltyrev exponent \cite{poltyrev94}. However, the proof of achieving Poltyrev exponent is more involved and we do not pursue it in this paper. The interested reader is referred to \cite{poltyrev94} and \cite{erez04}.

\subsection{Algebra}
In this subsection, we provide some preliminaries that will be useful in explaining our results in the following sections. All the lemmas are provided without proofs for the sake of brevity; however, their proofs can be found in standard textbooks on abstract algebra, see for example \cite{Hungerford74}.

We first recall some basic definitions for commutative rings. Let $\mc{R}$ be a commutative ring. Let $a, b\neq 0 \in\mc{R}$ but $ab = 0$, then $a$ and $b$ are \textit{zero divisors}. If $ab = ba = 1$, then we say $a$ is a \textit{unit}. Two elements $a, b\in\mc{R}$ are associates if $a$ can be written as the multiplication of a unit and $b$. A non-unit element $\phi\in\mc{R}$ is a prime if whenever $\phi$ divides $ab$ for some $a, b \in \mc{R}$, either $\phi$ divides $a$ or $\phi$ divides $b$. An \textit{integral domain} is a commutative ring with identity and no zero divisors. An additive subgroup $\mc{I}$ of $\mc{R}$ satisfying $ar\in\mc{I}$ for $a\in\mc{I}$ and $r\in\mc{R}$ is called an \textit{ideal} of $\mc{R}$. An ideal $\mc{I}$ of $\mc{R}$ is proper if $\mc{I}\neq\mc{R}$. An ideal generated by a singleton is called a \textit{principal ideal}. A \textit{principal ideal domain} (PID) is an integral domain in which every ideal is principal. Famous and important examples of PID include $\mbb{Z}$, $\Zi$ and $\Zw$. Let $a, b\in\mc{R}$ and $\mc{I}$ be an ideal of $\mc{R}$; then $a$ is congruent to $b$ \textit{modulo} $\mc{I}$ if $a-b\in\mc{I}$. The quotient ring $\mc{R}/\mc{I}$ of $\mc{R}$ by $\mc{I}$ is the ring with addition and multiplication defined as
\begin{align}
    (a+\mc{I})+(b+\mc{I}) &= (a+b)+\mc{I}, \text{~and} \\
    (a+\mc{I})\cdot(b+\mc{I}) &= (a\cdot b)+\mc{I}.
\end{align}

A proper ideal $\mc{P}$ of $\mc{R}$ is said to be a \textit{prime ideal} if for $a, b\in\mc{R}$ and $ab\in\mc{P}$, either $a\in\mc{P}$ or $b\in\mc{P}$. {\black For two ideals $\mc{I}_1$ and $\mc{I}_2$ of $\mc{R}$, let us define
\begin{equation}
    \mc{I}_1+\mc{I}_2 \triangleq \{a+b: a\in\mc{I}_1, b\in\mc{I}_2\},
\end{equation}
and
\begin{equation}
    \mc{I}_1\mc{I}_2 \triangleq \left\{\sum_{j=1}^n a_j b_j: a_j\in\mc{I}_1, b_j\in\mc{I}_2, n\in\mbb{N}\right\}.
\end{equation}
$\mc{I}_1$ and $\mc{I}_2$ are \textit{relatively prime} if $\mc{R} = \mc{I}_1+\mc{I}_2$, which also implies that $\mc{I}_1\mc{I}_2 = \mc{I}_1\cap\mc{I}_2$.} A proper ideal $\mc{O}$ of $\mc{R}$ is said to be a \textit{maximal ideal} if $\mc{O}$ is not contained in any strictly larger proper ideal. It should be noted that every maximal ideal is also a prime ideal but the reverse may not be true. Let $\mc{R}_1, \mc{R}_2, \ldots, \mc{R}_L$ be a family of rings, the direct product of these rings, denoted by $\mc{R}_1\times \mc{R}_2\times \ldots \times\mc{R}_L$, is the direct product of the additive Abelian groups $\mc{R}_l$ equipped with multiplication defined by the \textit{componentwise} multiplication.

Let $\mc{R}_1$ and $\mc{R}_2$ be rings. A function $\sigma:\mc{R}_1\rightarrow \mc{R}_2$ is a \textit{ring homomorphism} if
\begin{align}
    \sigma(1) &= 1, \\
    \sigma(a + b) &= \sigma(a) \oplus \sigma(b) ~\forall a,b\in\mc{R}_1, \\
    \sigma(a\cdot b) &= \sigma(a)\odot \sigma(b),~\forall a,b\in\mc{R}_1.
\end{align}
A homomorphism is said to be an \textit{isomorphism} if it is bijective. It is worth mentioning that for an ideal $\mc{I}$, $\hspace{-3pt}\mod\mc{I}:\mc{R} \rightarrow \mc{R}/\mc{I}$ is a natural ring homomorphism. A $\mc{R}$-module $\mc{N}$ over a ring $\mc{R}$ consists of an Abelian group ($\mc{N},+$) and an operation $\mc{R}\times \mc{N}\rightarrow \mc{N}$ which satisfies the same axioms as those for vector spaces. Let $\mc{N}_1$ and $\mc{N}_2$ be $\mc{R}$-modules. A function $\varphi:\mc{N}_1\rightarrow \mc{N}_2$ is a \textit{$\mc{R}$-module homomorphism} if
\begin{align}
    \varphi(a + b) &= \varphi(a) \oplus \varphi(b) ~\forall a,b\in\mc{N}_1 \text{~and} \\
    \varphi(r a) &= r \varphi(a),~\forall r\in\mc{R}, a\in\mc{N}_1.
\end{align}

We now present some lemmas which serve as the foundation of the paper.
\begin{lemma}\label{lma:PID}
    If $\mc{R}$ is a PID, then every non-zero prime ideal is maximal.
\end{lemma}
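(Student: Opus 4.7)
The plan is a direct proof using the principal-ideal property twice: once on the prime ideal itself, and once on any ideal that might lie strictly above it.

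First I would fix a non-zero prime ideal $\mc{P}$ of $\mc{R}$. Because $\mc{R}$ is a PID, I can write $\mc{P} = (p)$ for some non-zero $p \in \mc{R}$. To establish maximality, I would take an arbitrary ideal $\mc{I}$ with $\mc{P} \subseteq \mc{I} \subseteq \mc{R}$ and show that $\mc{I}$ must equal either $\mc{P}$ or $\mc{R}$. Again using the PID hypothesis, I write $\mc{I} = (a)$ for some $a \in \mc{R}$.

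Next, since $p \in \mc{P} \subseteq (a)$, there exists $b \in \mc{R}$ with $p = ab$. Now $ab = p \in \mc{P}$ and $\mc{P}$ is prime, so either $a \in \mc{P}$ or $b \in \mc{P}$. If $a \in \mc{P} = (p)$, then $(a) \subseteq (p)$, which combined with $(p) \subseteq (a)$ yields $\mc{I} = \mc{P}$. If instead $b \in \mc{P}$, write $b = pd$ for some $d \in \mc{R}$; then $p = ab = apd$, so $p(1-ad) = 0$. Here I would invoke the fact that every PID is, by definition, an integral domain, together with $p \neq 0$, to cancel $p$ and conclude $ad = 1$. Thus $a$ is a unit, which forces $\mc{I} = (a) = \mc{R}$.

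I expect no real obstacle; the only subtlety worth flagging is the use of cancellation, which depends on the integral-domain part of the PID definition and is precisely why the hypothesis ``non-zero'' on $\mc{P}$ cannot be dropped (the zero ideal in, say, $\mbb{Z}$ is prime but not maximal). Combining the two cases shows that no ideal strictly between $\mc{P}$ and $\mc{R}$ exists, so $\mc{P}$ is maximal, completing the proof.
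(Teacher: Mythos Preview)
Your proof is correct and is precisely the standard textbook argument. The paper does not supply its own proof of this lemma; it explicitly defers to standard references (Hungerford), so there is nothing further to compare.
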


\begin{lemma}\label{lma:MAX}
    Let $\mc{I}$ be an ideal in a commutative ring $\mc{R}$ with identity $1_{\mc{R}}\neq 0$. If $\mc{I}$ is maximal and $\mc{R}$ is commutative, then the quotient ring $\mc{R}/\mc{I}$ is a field.
\end{lemma}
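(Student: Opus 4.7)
The plan is to verify the field axioms for $\mc{R}/\mc{I}$ one by one, and observe that all but one are immediate. Since $\mc{R}$ is commutative with identity $1_{\mc{R}}\neq 0$, the quotient $\mc{R}/\mc{I}$ automatically inherits commutativity and has $1_{\mc{R}}+\mc{I}$ as a multiplicative identity. Because $\mc{I}$ is maximal it is by definition proper, so $1_{\mc{R}}\notin\mc{I}$ and hence $1_{\mc{R}}+\mc{I}\neq 0_{\mc{R}/\mc{I}}$. The only nontrivial axiom to verify is therefore the existence of a multiplicative inverse for every nonzero element of $\mc{R}/\mc{I}$.

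For this, I would fix an arbitrary nonzero coset $a+\mc{I}\in\mc{R}/\mc{I}$, which by definition means $a\notin\mc{I}$. The central step is to form the ideal
\begin{equation}
    \mc{J} \triangleq \mc{I} + (a) = \{i + ra : i\in\mc{I},\ r\in\mc{R}\},
\end{equation}
which is routinely checked to be an ideal of $\mc{R}$ (closure under addition and absorption under multiplication by $\mc{R}$ both follow from the corresponding properties of $\mc{I}$). Since $a\in\mc{J}$ but $a\notin\mc{I}$, we have $\mc{I}\subsetneq\mc{J}$, so by the maximality hypothesis the only possibility is $\mc{J}=\mc{R}$. In particular $1_{\mc{R}}\in\mc{J}$, so there exist $i\in\mc{I}$ and $r\in\mc{R}$ with $1_{\mc{R}} = i + r a$.

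Passing this identity through the natural ring homomorphism $\hspace{-3pt}\mod\mc{I}:\mc{R}\rightarrow\mc{R}/\mc{I}$ introduced earlier in the preliminaries, the term $i$ vanishes and one obtains $1_{\mc{R}}+\mc{I} = (r+\mc{I})(a+\mc{I})$, exhibiting $r+\mc{I}$ as a multiplicative inverse of $a+\mc{I}$. This would complete the verification that $\mc{R}/\mc{I}$ is a field.

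I do not anticipate any real obstacle: the entire argument rests on the single observation that $\mc{I}+(a)$ properly contains $\mc{I}$ and must therefore be all of $\mc{R}$. The only subtlety worth flagging is the role of commutativity, which ensures that the element $r$ produced above yields a genuine two-sided inverse rather than merely a one-sided one; in a noncommutative setting the analogous statement requires a more delicate argument. Commutativity is assumed here, so this concern does not arise.
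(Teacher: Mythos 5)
Your proof is correct and is the standard textbook argument. The paper itself does not prove this lemma --- it explicitly defers to references such as \cite{Hungerford74} for all the algebraic preliminaries --- and the argument you give (pass from $a\notin\mc{I}$ to $\mc{I}+(a)=\mc{R}$ by maximality, write $1_{\mc{R}}=i+ra$, and project modulo $\mc{I}$) is precisely the one found in those standard references.
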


\begin{lemma}[Chinese Remainder Theorem]\label{lma:CRT}
    Let $\mc{R}$ be a commutative ring, and $\mc{I}_1,\ldots,\mc{I}_n$ be {\black relatively prime} ideals in $\mc{R}$. Then,
    \begin{equation}
        \mc{R}/\cap_{i=1}^n\mc{I}_i \cong \left(\mc{R}/\mc{I}_1\right)\times\ldots\times\left(\mc{R}/\mc{I}_n\right).
    \end{equation}
\end{lemma}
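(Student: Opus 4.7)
The plan is to establish the isomorphism via the First Isomorphism Theorem applied to the natural diagonal map. Define
\begin{equation*}
\phi \from \mc{R} \to (\mc{R}/\mc{I}_1) \times \cdots \times (\mc{R}/\mc{I}_n), \qquad \phi(r) \defeq (r + \mc{I}_1, \ldots, r + \mc{I}_n).
\end{equation*}
Since each coordinate is the natural ring homomorphism $r \mapsto r + \mc{I}_i$ (which, as noted in the excerpt, is a ring homomorphism), and since addition and multiplication in the direct product are componentwise, $\phi$ is immediately a ring homomorphism. Its kernel is precisely the set of elements lying in every $\mc{I}_i$, i.e.\ $\ker(\phi) = \bigcap_{i=1}^n \mc{I}_i$. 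Thus, once I have surjectivity in hand, the First Isomorphism Theorem yields $\mc{R}/\bigcap_{i=1}^n \mc{I}_i \cong (\mc{R}/\mc{I}_1) \times \cdots \times (\mc{R}/\mc{I}_n)$, as claimed.

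The main obstacle is proving surjectivity, and this is the only place where the relative primality hypothesis gets used. My strategy is to construct, for each index $i$, a ``CRT selector'' $e_i \in \mc{R}$ satisfying
\begin{equation*}
e_i \equiv 1 \pmod{\mc{I}_i} \qquad \text{and} \qquad e_i \equiv 0 \pmod{\mc{I}_j} \quad \text{for all } j \neq i.
\end{equation*}
Given such elements, any prescribed tuple $(a_1 + \mc{I}_1, \ldots, a_n + \mc{I}_n)$ is realized as $\phi\bigl(\sum_{i=1}^n a_i e_i\bigr)$: reducing the sum modulo $\mc{I}_k$ annihilates every term with $i \neq k$ and leaves $a_k \cdot 1 = a_k$ in the $k$th slot.

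To build $e_i$, I would read ``relatively prime'' in the natural pairwise sense, consistent with the two-ideal definition provided just before the lemma. For every $j \neq i$ use $\mc{I}_i + \mc{I}_j = \mc{R}$ to write $1 = u_{ij} + v_{ij}$ with $u_{ij} \in \mc{I}_i$ and $v_{ij} \in \mc{I}_j$, and set $e_i \defeq \prod_{j \neq i} v_{ij}$. For any fixed $j \neq i$ the factor $v_{ij}$ lies in $\mc{I}_j$, and ideals absorb multiplication, so $e_i \in \mc{I}_j$; on the other hand, reducing modulo $\mc{I}_i$ each factor becomes $v_{ij} = 1 - u_{ij} \equiv 1$, whence $e_i \equiv 1 \pmod{\mc{I}_i}$. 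This produces the required selectors, completes surjectivity, and hence proves the lemma. The only technical subtlety I anticipate is making sure the reader interprets the $n$-ideal relative primality as pairwise; this is exactly the hypothesis the construction of $e_i$ consumes, and it matches the binary definition supplied just above the statement.
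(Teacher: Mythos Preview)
Your argument is correct and is the standard textbook proof of the Chinese Remainder Theorem via the diagonal map and the First Isomorphism Theorem. The paper itself does not supply a proof of this lemma: it is stated in the preliminaries with the explicit remark that ``All the lemmas are provided without proofs for the sake of brevity; however, their proofs can be found in standard textbooks on abstract algebra,'' citing Hungerford. Your proof is exactly the kind of argument one finds in such a reference, so there is nothing to contrast.
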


\begin{example}\label{exp:ring_iso}
    Consider the PID $\mbb{Z}$ and one of its ideal $6\mbb{Z}$. Note that one can do the prime factorization $6=2\cdot 3$. Now since 2 and 3 are primes, $2\mbb{Z}$ and $3\mbb{Z}$ are prime ideals. Also, since $2\mbb{Z}+3\mbb{Z}=\mbb{Z}$, they are relatively prime. This implies that $2\cdot 3\mbb{Z} = 2\mbb{Z}\cap 3\mbb{Z}$. One has that
    \begin{align}
        \mbb{Z}_6 &\cong \mbb{Z}/6\mbb{Z} = \mbb{Z}/2\cdot 3 \mbb{Z} \nonumber \\
        &\overset{(a)}{=} \mbb{Z}/2\mbb{Z}\cap 3\mbb{Z} \nonumber \\
        &\overset{(b)}{\cong} \mbb{Z}/2\mbb{Z}\times\mbb{Z}/3\mbb{Z} \nonumber \\
        &\overset{(c)}{\cong} \mbb{F}_2\times \mbb{F}_3,
    \end{align}
    where (a) follows from that $2\mbb{Z}$ and $3\mbb{Z}$ are relatively prime, (b) follows from CRT, and (c) is from Lemma~\ref{lma:MAX}. {\black An isomorphism is given by $\mc{M}(v^1,v^2)=p_2v^1+(-1)p_1v^2\hspace{-3pt}\mod p_1p_2\mbb{Z}=3v^1-2v^2 \hspace{-3pt}\mod 6\mbb{Z}$ where $v^1\in\mbb{F}_2$ and $v^2\in\mbb{F}_3$.}
    One can easily see from this example that the product of two fields is not a field. In this example, the product is isomorphic to $\mbb{Z}_6$ which is a ring but not a field.
\end{example}

\subsection{Construction A Lattices}
We now review Construction A lattices and discuss some properties of such lattices and some related constructions. For the sake of brevity, we only discuss Construction A lattices over $\mbb{Z}$ but extensions to other PID such as $\Zi$ and $\Zw$ are possible (see for example \cite{Engin14}). It is worth noting that although Construction A from codes over prime fields is more frequently seen in the literature, we consider here the more general definition {\black of Construction A of lattices from codes over a finite rings $\mbb{Z}_q$ ($q$-ary lattices in \cite{zamir_book})}. This more general construction subsumes {\black the construction proposed in Section~\ref{sec:prod_const}} as a special case.

{\black {\bf \underline{Construction A}} \cite{LeechSloane71} \cite{conway1999sphere} \cite[Page 31]{zamir_book} Let $q>1$ be an integer. Let $k,N\in\mbb{N}$ be integers such that $k\leq N$ and let $\mathbf{G}$ be a generator matrix of an $(N,k)$ linear code over $\mbb{Z}_q$. Construction A consists of the following steps:
\begin{enumerate}
    \item {\black Consider the linear code $C=\{\mathbf{x}=\mathbf{G}\odot\mathbf{y}:\mathbf{y}\in\mbb{Z}_q^k\}$}, where all operations are over $\mbb{Z}_q$.
    \item {\black ``Expand" $C$ to a lattice in $\mbb{Z}^N$ defined as:
    \begin{equation}
        \Lambda_{\text{A}} \defeq \left\{ \mathbf{x}\in\mbb{Z}^N: \mathbf{x}\hspace{-3pt}\mod q\in C \right\} = C + q\mbb{Z}.
    \end{equation}}
\end{enumerate}
{\black It is shown in \cite[page 31]{zamir_book} that $\Lambda_{\text{A}}$ is a non degenerated lattice, $q\mbb{Z}^N\subset\Lambda\subset\mbb{Z}^N$, and that the volume of this lattice is $q^N/M$, where $M$ is the size of the code $C$.}

{\black Using lattices from Construction A with codes over $\mbb{F}_p$ for communication over the AWGN channel has been investigated for decades.} It has been shown by Forney \textit{et al.} \cite{forney2000} that Construction A yields a sequence of lattices that is good for channel coding whenever the underlying linear codes achieve the capacity of the corresponding $\hspace{-3pt}\mod p\mbb{Z}$-channel and $p$ is sufficiently large (tends to infinity). Loeliger in \cite{loeliger97} used the Minkowski-Hlawka theorem to show that randomly picking {\black a code from the random $(N,k)$ linear code ensemble} and applying the above construction would with high probability result in lattices that are good for channel coding if $p$ tends to infinity. Using such lattices for the power-constrained AWGN channel, Loeliger showed that $\frac{1}{2}\log(\text{SNR})$ is achievable. Erez \textit{et al.} \cite{erez05} then moved on and showed that the random ensemble of Loeliger in fact produces lattices that are simultaneously good in many senses including channel coding, MSE quantization, covering, and packing with high probability if the parameters $p$, $k$, $N$ are carefully chosen (and of course tend to infinity). This result has allowed Erez and Zamir to show the existence of a sequence of nested lattice codes generated from Construction A lattices that can achieve the AWGN capacity, $\frac{1}{2}\log(1+\text{SNR})$ bits/channel, under lattice decoding \cite{erez04}. Since then, the nested lattice code ensemble of Erez and Zamir has been applied to many problems in networks. It is worth noting that recently, there has been another ensemble of nested lattice codes from Construction A lattices proposed by Ordentlich and Erez \cite{ordentlich_erez_simple} which can achieve the AWGN capacity as well.

From the practical aspect, there have been some efforts in constructing lattices based on Construction A with practical coding schemes. In \cite{diPietro12} (also appeared in \cite{Engin12}), di Pietro \textit{et al.} used non-binary low-density parity-check (LDPC) codes in conjunction with Construction A to construct lattices and referred this family of lattices to as the low-density A (LDA) lattices. Simulation results reported in \cite{diPietro12} showed that such lattices can approach the Poltyrev-limit to within 0.7 dB at a block length of 10000 under message-passing decoding. They then moved on and rigorously showed in \cite{diPietro13} that LDA lattices can achieve the Poltyrev-limit under maximum likelihood decoding. Inspired by the success of spatially-coupled LDPC codes for binary memoryless channels, Tunali \textit{et al.} \cite{Engin13ITW} replaced LDPC codes in LDA lattices by spatially-coupled LDPC codes and reported a BP-threshold of 0.19 dB away from the Poltyrev-limit at a block length of $1.29\times 10^6$. {\black Very recently, it has been shown in \cite{diPietro16} that LDA lattice codes can achieve the AWGN capacity without dithering. It is worth noting that there is another ensemble of lattice codes inspired by LDPC codes called low-density lattice codes (LDLC) \cite{LDLC}. It has been shown by simulation that LDLC can provide good error probability performance with low-complex decoding \cite{Kurkoski08, Kurkoski10}; however, to the best of our knowledge, no goodness results have been shown for lattice codes drawn from this ensemble. }

One crucial issue in Construction A lattices with codes over $\mbb{F}_p$ is that typically speaking, the decoding complexity depends on decoding the underlying linear code, which is over $\mbb{F}_p$. However, in order to get a good lattice, one has to let $p$ grow rapidly; hence this results in a huge decoding complexity. For instance, the 0.7 dB gap result reported in \cite{diPietro12} corresponds to using a linear code over $\mbb{F}_{41}$ together with a prime ideal $(4+5i)\Zi$ in $\Zi$ and the 0.19 dB result in \cite{Engin13ITW} corresponds to a linear code over $\mbb{F}_{31}$ with a prime ideal $(-1-6\omega)\Zw$ in $\Zw$. This is mainly because after identifying the coset representative (i.e., decoding the underlying linear code), lattice points inside a coset are unprotected by any code and therefore the only obvious way to avoid errors is to increase the Euclidean distance, i.e., to increase $p$.

\section{Construction $\pi_A$ Lattices}\label{sec:prod_const}
{\black In order to alleviate the high decoding complexity of Construction A lattices, we propose a lattice construction called Construction $\pi_A$. This construction is a special case of Construction A from codes over rings.} Note that Construction $\pi_A$ can be used for generating lattices over $\mbb{Z}$, $\Zi$, and $\Zw$. In this section, we will only talk about $\mbb{Z}$ and the cases of $\Zi$ and $\Zw$ will follow similarly. A depiction of Construction $\pi_A$ can be found in Fig.~\ref{fig:lattice_const}. Construction $\pi_A$ heavily relies on the existence of ring isomorphisms guaranteed by CRT in Lemma~\ref{lma:CRT} (see also \cite[Corollary 2.27]{Hungerford74}).

\begin{proposition}\label{prop:ring_iso}
    Let $p_1, p_2,\ldots,p_L$ be a collection of distinct primes and let $q=\Pi_{l=1}^L p_l$. There exists a ring isomorphism $\mc{M}:\times_{l=1}^L \mbb{F}_{p_l}\rightarrow \mbb{Z}/ q \mbb{Z}$. 
\end{proposition}
\begin{proof}
    Note that
    \begin{align}\label{eqn:isomorphic}
        \mbb{Z}_q\cong \mbb{Z}/q \mbb{Z} &\overset{(a)}{\cong} \mbb{Z}/\cap_{l=1}^L p_l\mbb{Z} \nonumber \\
        &\overset{(b)}{\cong} \mbb{Z}/p_1\mbb{Z}\times\ldots\times\mbb{Z}/p_L\mbb{Z} \nonumber \\
        &\overset{(c)}{\cong} \mbb{F}_{p_1}\times\ldots\times\mbb{F}_{p_L},
    \end{align}
    where (a) follows from that $p_l \mbb{Z}$ are relatively prime, (b) is from CRT in Lemma~\ref{lma:CRT}, and (c) is due to the fact that $\mbb{Z}$ is a PID and Lemma~\ref{lma:MAX}. Therefore, a ring isomorphism $\mc{M}$ between the product of fields $\times_{l=1}^L \mbb{F}_{p_l}$ and the quotient ring $\mbb{Z}/\Pi_{l=1}^L p_l\mbb{Z}$ exists. 
\end{proof}
{\black One way to obtain a ring isomorphism $\mc{M}$ is to first label every element $\zeta\in\mbb{Z}_q$, $q=\Pi_{l=1}^L p_l$, by the natural mapping and then define $\mc{M}^{-1}\defeq\left(\zeta\hspace{-3pt}\mod p_1,\ldots,\zeta\hspace{-3pt}\mod p_L\right)$. Another way is to directly solve for $a_1,\ldots,a_L$ in B\'{e}zout's identity given by
\begin{equation}
    a_1q_1+a_2q_2+\ldots+a_L q_L=1,
\end{equation}
where $q_l=q/p_l$ and obtain
\begin{equation}\label{eqn:ring_iso}
    \mc{M}(v^1,\ldots,v^L) = a_1q_1v^1+a_2q_2v^2+\ldots+a_Lq_Lv^L\hspace{-3pt}\mod q,
\end{equation}
where $v^l\in\mbb{F}_{p_l}\cong\mbb{Z}_{p_l}$.} 

We are now ready to present the Construction $\pi_A$ lattices.

\textbf{\underline{Construction $\pi_A$}} Let $p_1, p_2,\ldots, p_L$ be distinct primes. Let $m^l$, $N$ be integers such that $m^l\leq N$ and let $\mathbf{G}^l$ be a generator matrix of an $(N,m^l)$ linear code over $\mbb{F}_{p_l}$ for $l\in\{1,\ldots,L\}$. Construction $\pi_A$ consists of the following steps,

\begin{enumerate}
    \item Define the discrete codebooks $C^l=\{\mathbf{x}=\mathbf{G}^l\odot\mathbf{y}:\mathbf{y}\in\mbb{F}_{p_l}^{m^l}\}$ for $l\in\{1,\ldots,L\}$.
    \item Construct $\Lambda^* \defeq \mc{M}(C^1,\ldots,C^L)$ where $\mc{M}:\times_{l=1}^L \mbb{F}_{p_l}\rightarrow \mbb{Z}/\Pi_{l=1}^L p_l \mbb{Z}$ is a ring isomorphism.
    \item Tile $\Lambda^*$ to the entire $\mbb{R}^N$ to form $\Lambda \triangleq  \Lambda^* + \Pi_{l=1}^L p_l \mbb{Z}^N$.\footnote{Note that scaling by real numbers does not change the structure of a lattice; therefore, throughout the paper, we use $\Lambda \triangleq  \Lambda^* + \Pi_{l=1}^L p_l \mbb{Z}^N$ and $\Lambda \triangleq  \left(\Pi_{l=1}^L p_l\right)^{-1}\Lambda^* + \mbb{Z}^N$ interchangeably.}
\end{enumerate}
\begin{figure}
    \centering
    \includegraphics[width=2.4in]{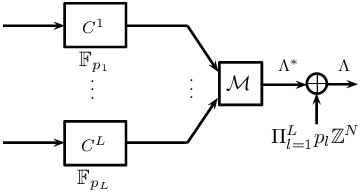}
    \caption{The Construction $\pi_A$ of lattices.}
    \label{fig:lattice_const}
\end{figure}
Note that the existence of the ring isomorphism in step 2) is guaranteed by Proposition~\ref{prop:ring_iso}. {\black Now, let $q=\Pi_{l=1}^L p_l$. Every Construction $\pi_A$ lattice is a Construction A lattice with a linear code over $\mbb{Z}_q$ with the generator matrix $\mathbf{G}$ such that $\mathbf{G}\hspace{-3pt}\mod p_l$ and $\mathbf{G}^l$ generate the same code $C^l$. Consequently, similar to \cite[Proposition 2.5.1]{zamir_book}, the following properties hold.}

\begin{proposition}\label{prop:lattice}

    \begin{enumerate}
        \item $\Lambda$ is a lattice and $\boldsymbol\lambda\in\Lambda$ if and only if $\sigma(\boldsymbol\lambda)\in C^1\times \ldots \times C^L$ where $\sigma = \mc{M}^{-1}\circ\hspace{-3pt}\mod\Pi_{l=1}^L p_l\mbb{Z}$ is the ring homomorphism given in Proposition~\ref{prop:ring_iso}.
        \item $q\mbb{Z}^N\subset\Lambda\subset\mbb{Z}^N$.
        \item $\text{Vol}(\mc{V}_{\Lambda})=\left|\mbb{Z}^N/\Lambda\right|=q^N/M$, where $M$ is the size of $C^1\times\ldots\times C^L$. Furthermore, if every $\mathbf{G}^l$ is full rank,  $\text{Vol}(\mc{V}_{\Lambda})= \Pi_{l=1}^L p_l^{N-m^l}$.
    \end{enumerate}
\end{proposition}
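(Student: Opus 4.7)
The plan is to prove the three parts in order, exploiting throughout that $\mathcal{M}$ (applied componentwise to $N$-tuples) is a ring isomorphism $(\times_l \mathbb{F}_{p_l})^N \to (\mathbb{Z}/q\mathbb{Z})^N$ and that $q\mathbb{Z}^N$ is the kernel of the componentwise reduction $\mathbb{Z}^N \to (\mathbb{Z}/q\mathbb{Z})^N$.

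For part 1, I would first observe that each $C^l$ is a subgroup of $\mathbb{F}_{p_l}^N$, so $C^1 \times \cdots \times C^L$ is a subgroup of the additive group $(\times_l \mathbb{F}_{p_l})^N$. Since $\mathcal{M}$ is in particular a group isomorphism, $\Lambda^{*} = \mathcal{M}(C^1, \ldots, C^L)$ is a subgroup of $(\mathbb{Z}/q\mathbb{Z})^N$. Adding $q\mathbb{Z}^N$, which is itself a subgroup of $\mathbb{Z}^N$, yields a subgroup $\Lambda$ of $\mathbb{Z}^N$; closure under negation and addition makes $\Lambda$ a lattice (discreteness is inherited from $\mathbb{Z}^N$). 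For the ``if and only if'' characterization, note $\boldsymbol\lambda \in \Lambda$ iff $\boldsymbol\lambda \bmod q\mathbb{Z}^N \in \Lambda^{*}$, which by definition of $\Lambda^{*}$ is equivalent to $\mathcal{M}^{-1}(\boldsymbol\lambda \bmod q\mathbb{Z}^N) \in C^1 \times \cdots \times C^L$, i.e., $\sigma(\boldsymbol\lambda) \in C^1 \times \cdots \times C^L$.

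For part 2, the inclusion $q\mathbb{Z}^N \subset \Lambda$ is immediate from $\Lambda = \Lambda^{*} + q\mathbb{Z}^N$ (take the $\Lambda^{*}$-component to be $\mathbf{0}$, which lies in $\Lambda^{*}$ since $\mathbf{0} \in C^l$ for every $l$). For $\Lambda \subset \mathbb{Z}^N$, identify $\Lambda^{*} \subset \mathbb{Z}/q\mathbb{Z}^N$ with its set of canonical representatives in $\{0,1,\ldots,q-1\}^N \subset \mathbb{Z}^N$; then $\Lambda^{*} + q\mathbb{Z}^N \subset \mathbb{Z}^N$.

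For part 3, the standard fact that $\mathrm{Vol}(\mathcal{V}_\Lambda) = [\mathbb{Z}^N : \Lambda]$ (since $\Lambda$ is a full-rank sublattice of $\mathbb{Z}^N$ by part 2, and $\mathbb{Z}^N$ has covolume $1$) reduces the volume computation to counting cosets. I would use the tower $q\mathbb{Z}^N \subset \Lambda \subset \mathbb{Z}^N$ to write
\begin{equation}
[\mathbb{Z}^N : \Lambda] \;=\; \frac{[\mathbb{Z}^N : q\mathbb{Z}^N]}{[\Lambda : q\mathbb{Z}^N]} \;=\; \frac{q^N}{|\Lambda^{*}|}.
\end{equation}
Since $\mathcal{M}$ restricted to $C^1 \times \cdots \times C^L$ is a bijection onto $\Lambda^{*}$, we have $|\Lambda^{*}| = M$, giving $\mathrm{Vol}(\mathcal{V}_\Lambda) = q^N/M$. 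When each $\mathbf{G}^l$ is full rank, $|C^l| = p_l^{m^l}$, so $M = \prod_l p_l^{m^l}$ and
\begin{equation}
\mathrm{Vol}(\mathcal{V}_\Lambda) \;=\; \frac{\prod_l p_l^N}{\prod_l p_l^{m^l}} \;=\; \prod_{l=1}^L p_l^{N-m^l}.
\end{equation}

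There is no real obstacle here; the statement is essentially a bookkeeping exercise once one has the ring isomorphism of Proposition~\ref{prop:ring_iso} in hand. The only step that requires a moment of care is justifying $[\Lambda : q\mathbb{Z}^N] = |\Lambda^{*}|$, which follows because the componentwise $\bmod\, q\mathbb{Z}$ map sends $\Lambda$ onto $\Lambda^{*}$ with kernel exactly $q\mathbb{Z}^N \cap \Lambda = q\mathbb{Z}^N$ (using part 2).
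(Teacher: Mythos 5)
Your proof is correct, but it takes a genuinely different route from the paper. The paper does not argue from first principles: it observes that a Construction $\pi_A$ lattice \emph{is} a Construction A lattice with a linear code over $\mathbb{Z}_q$ (take $\mathbf{G}$ so that $\mathbf{G} \bmod p_l$ and $\mathbf{G}^l$ generate the same $C^l$), and then appeals to the known properties of Construction A lattices (Zamir's Proposition~2.5.1) for all three claims. You instead prove the three statements directly from the ring-isomorphism property of $\mathcal{M}$: part~1 by identifying $\Lambda$ as the preimage of the subgroup $\Lambda^* \subset (\mathbb{Z}/q\mathbb{Z})^N$ under componentwise reduction mod $q$, and part~3 via the tower formula $[\mathbb{Z}^N : \Lambda] = [\mathbb{Z}^N : q\mathbb{Z}^N]/[\Lambda : q\mathbb{Z}^N]$ together with the bijection $\Lambda/q\mathbb{Z}^N \cong \Lambda^* \cong C^1 \times \cdots \times C^L$. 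The paper's route is more economical and makes the structural link to classical Construction A explicit; your route is self-contained and does not require the reader to unwind the identification with a $\mathbb{Z}_q$-code. One cosmetic remark on part~1: the phrase ``adding $q\mathbb{Z}^N$ \ldots\ yields a subgroup'' is slightly loose as stated, since adding a subgroup to an arbitrary subset need not produce a subgroup; the clean justification is the one you implicitly use (and make explicit in part~3), namely that $\Lambda$ is the preimage of the subgroup $\Lambda^*$ under the reduction homomorphism, hence a subgroup of $\mathbb{Z}^N$ and therefore discrete, i.e.\ a lattice.
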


\begin{example}
Let us consider a two-level example where $p_1=3$ and $p_2=5$. One has $\Z/15\Z\cong \mbb{F}_3\times \mbb{F}_5$ and a ring isomorphism {\black $\mc{M}(v^1,v^2)=-p_2v^1+2p_1v^2\hspace{-3pt}\mod p_1p_2=-5v^1+6v^2\hspace{-3pt}\mod 15$, where $v^1\in\mbb{F}_3$ and $v^2\in\mbb{F}_5$.} Let us choose $\mathbf{G}^1=[1,2]^T$ over $\mbb{F}_3$ and $\mathbf{G}^2=[1,1]^T$ over $\mbb{F}_5$ which define the discrete codebooks $C^1$ and $C^2$, respectively. In Fig.~\ref{fig:prod_ex1}, we show the step 2) of Construction $\pi_A$ where we use the ring isomorphism $\mc{M}$ to modulate the codewords onto $\Z/15\Z$ to form $\Lambda^*$. Each element $\boldsymbol\lambda^*\in\Lambda^*$ is a coset representative of the coset $\boldsymbol\lambda^* + 15 \Z^2$. In Fig.~\ref{fig:prod_ex2}, we further tile this set of coset representatives to the entire $\mbb{R}^2$; this corresponds to the step 3) in Construction $\pi_A$. {\black It should be noted that, using the above ring isomorphism, the lattice considered in this example is identical to the Construction A lattice from the linear code over $\mbb{Z}_{15}$ with $\mathbf{G}=[1,11]$ (see Fig.~\ref{fig:prod_ex2}).}
\begin{figure}
    \centering
    \includegraphics[width=3.5in]{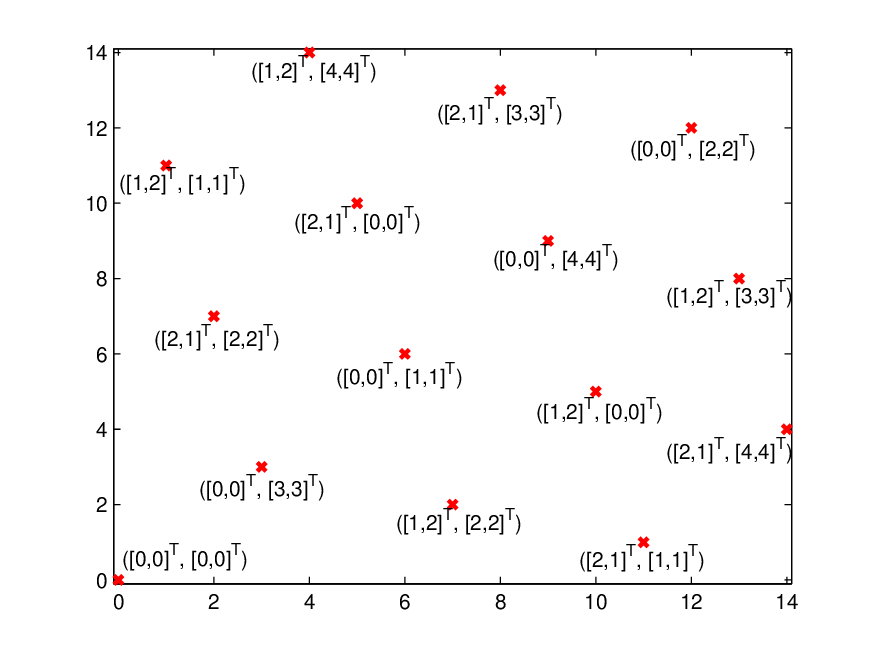}
    \caption{The set of coset representatives $\Lambda^*$ generated by $\mc{M}(C^1,C^2)$ where the corresponding codewords are also shown.}
    \label{fig:prod_ex1}
\end{figure}
\begin{figure}
    \centering
    \includegraphics[width=3.5in]{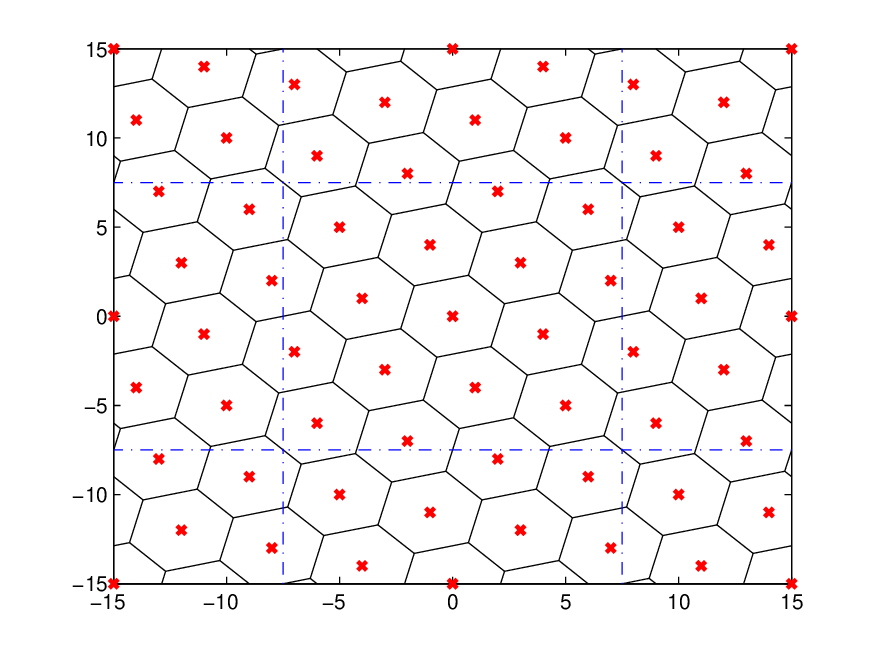}
    \caption{Tiling $\Lambda^*$ to the entire $\mbb{R}^2$ to form $\Lambda$. i.e., $\Lambda = \Lambda^*+15\Z^2$.}
    \label{fig:prod_ex2}
\end{figure}
\end{example}

One important reason that makes Construction $\pi_A$ lattices distinguish themselves from other Construction A lattices with codes over $\mbb{Z}_q$ is the close connection to multilevel coding over prime fields $\mbb{F}_{p_l}$. This structure is exploited to show the following properties.
{\black \begin{theorem}\label{thm:goodness}
    There exists a sequence of Construction $\pi_A$ lattices that is good for channel coding under multistage decoding.
\end{theorem}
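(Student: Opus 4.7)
The plan is to run a random-coding argument on the ensemble of Construction $\pi_A$ lattices in which each generator matrix $\mathbf{G}^l$ is drawn independently and uniformly at random over $\mbb{F}_{p_l}^{N\times m^l}$, while letting $N$ and each of the primes $p_1<p_2<\cdots<p_L$ tend to infinity with appropriately chosen dimensions $m^l=m^l(N)$. The key structural observation is that, thanks to Proposition~\ref{prop:ring_iso}, every Construction $\pi_A$ lattice carries a natural multilevel decomposition: at the $l$-th level the code is a random linear code of rate $m^l/N$ over the prime field $\mbb{F}_{p_l}$, and the CRT labelling $\mc{M}$ ensures that, conditional on correct decoding at lower levels, the $l$-th level reduces to a decoding problem on a mod-$p_l$ additive Gaussian channel whose noise variance depends only on the original noise and on the lower primes.

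For the Poltyrev-goodness half I would analyze multistage decoding explicitly. Given $\mathbf{y}=\boldsymbol\lambda+\mathbf{z}$ with $\boldsymbol\lambda\in\Lambda$, reducing $\mathbf{y}$ modulo $p_1\mbb{Z}^N$ isolates the mod-$p_1$ Gaussian channel $\mathbf{v}^1+\mathbf{z}\bmod p_1\mbb{Z}^N$, since $a_1q_1\equiv 1\pmod{p_1}$ and $p_1$ divides $q_l$ for $l\geq 2$; this is precisely the setting addressed by Loeliger's and Forney \emph{et al.}'s Minkowski--Hlawka arguments for Construction A over a single prime field, so that a uniformly random $\mathbf{G}^1$ with $(m^1/N)\log p_1$ strictly below the mod-$p_1$ capacity decodes $\mathbf{v}^1$ reliably as $N\to\infty$. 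After subtracting $a_1q_1\mathbf{v}^1$, the residual is divisible by $p_1$ and, after rescaling by $1/p_1$, becomes a fresh Construction $\pi_A$ instance over the remaining $L-1$ primes with Gaussian noise of variance $\eta^2/p_1^2$. Iterating, at level $l$ one decodes a mod-$p_l$ channel with effective noise variance $\eta^2/(p_1\cdots p_{l-1})^2$, and the chain rule for mutual information makes the sum of the $L$ per-level capacities equal to the capacity of the overall mod-$q$ AWGN channel, which for large primes equals $\log q-\tfrac{1}{2}\log(2\pi e\eta^2)$. Combined with $\tfrac{1}{N}\log \mathrm{Vol}(\mc{V}_\Lambda)=\log q-\sum_l(m^l/N)\log p_l$ from Proposition~\ref{prop:lattice}, choosing the total rate $\sum_l(m^l/N)\log p_l$ just below this aggregate capacity is exactly the Poltyrev condition $\eta^2<\mathrm{Vol}(\mc{V}_\Lambda)^{2/N}/(2\pi e)$.

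For the MSE-quantization half I would exploit the observation already made in the paper that every Construction $\pi_A$ lattice is also a Construction A lattice over $\mbb{Z}_q$, obtained by inverting the CRT isomorphism. A uniformly random choice of the $\mathbf{G}^l$'s then corresponds to a suitable uniform distribution on generator matrices over $\mbb{Z}_q$, so that the classical Minkowski--Hlawka averaging argument of Erez, Litsyn and Zamir carries over to show that, as $N\to\infty$ and $q=\Pi_{l=1}^L p_l\to\infty$, a typical lattice from the ensemble has normalized second moment arbitrarily close to $1/(2\pi e)$. The main obstacle I anticipate lies in reconciling the two random-coding arguments: the multistage-decoding analysis prescribes a specific apportionment of the total rate among the $L$ levels (through the chain-rule capacities), whereas the MSE-goodness argument is most naturally stated for a single code over $\mbb{Z}_q$ with a common dimension. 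I would address this by first fixing the target $\mathrm{Vol}(\mc{V}_\Lambda)$ to lie just above the Poltyrev threshold while also matching the MSE normalization, then distributing the $(m^l)$'s across levels according to the chain-rule prescription, and finally applying the probabilistic method with a union bound to extract from the ensemble a deterministic sequence of Construction $\pi_A$ lattices that is simultaneously good for MSE quantization and good for channel coding under multistage decoding.
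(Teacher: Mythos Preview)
Your channel-coding half and the paper's diverge in an important way. The decoder you analyze---reduce $\bmod\,p_1$, decode, subtract $a_1q_1\mathbf{v}^1$, divide by $p_1$, repeat---is precisely the \emph{serial modulo decoder} (SMD) that the paper introduces later in Section~\ref{sec:para_dec} as a \emph{suboptimal} alternative. The multistage decoder for which Theorem~\ref{thm:goodness} is proved does \emph{not} perform $\bmod\,p_l$ at level $l$; it subtracts only the contribution of the already-decoded levels and works with the full $\Lambda/\Lambda'$ channel. The paper then shows that this conditional channel is \emph{regular} in the Delsarte--Piret sense, which is what licenses the random-linear-code argument and makes the chain rule $I(\msf{Y};\msf{C}^1,\ldots,\msf{C}^L)=\sum_l I(\msf{Y};\msf{C}^l\mid \msf{C}^{<l})$ tight. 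Your invocation of the chain rule is therefore attached to the wrong decoder: the per-level rates you actually achieve are $I(\msf{\tilde Y}^l;\msf{C}^l\mid \msf{C}^{<l})$ with $\msf{\tilde Y}^l$ the rescaled $\bmod\,p_l$ output, and these do \emph{not} sum to $I(\msf{Y};\msf{X})$ in general. A Forney-style Construction~D argument can salvage your route (the $\bmod$ loss vanishes as $\Pi_l p_l\to\infty$), but as written the justification conflates two different decoders.

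The MSE-quantization half has a more substantive gap. You propose to view the lattice as Construction~A over $\mbb{Z}_q$ and invoke Erez--Litsyn--Zamir directly, but that argument is written for a \emph{prime} field: its key step is that for any nonzero message $\mathbf{w}$, the codeword $\mathbf{G}\mathbf{w}$ is uniform over $\mbb{F}_p^N$, and pairwise independence holds unless $\mathbf{w}_1$ and $\mathbf{w}_2$ are scalar multiples. Over the ring $\mbb{Z}_q$ with $q=\Pi_l p_l$ this fails---any $\mathbf{w}$ whose CRT component $\mathbf{w}^l$ vanishes for some $l$ produces a codeword supported on a proper subgroup, and the dependency structure among codewords is richer because of zero divisors. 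The paper handles exactly this by following Ordentlich--Erez rather than Erez--Litsyn--Zamir: it restricts to $\mathbf{w}^l\in\mbb{F}_{p_l}^{m^l}\setminus\{\mathbf{0}\}$ for every $l$, bounds the covariance $\text{Cov}(\chi_i,\chi_j)$ by counting how many $\mathbf{w}_2$ can be congruent to a scalar multiple of $\mathbf{w}_1$ modulo \emph{some} $p_l$, and chooses $q\asymp N^{3/2}$ with $L=o(N)$ so that the resulting factor $\sum_l p_l\,\Pi_{k\neq l}p_k^{m_k}$ is controlled. Your proposal does not identify this obstruction or the specific parameter scaling needed to overcome it, so the ``carries over'' step is where the argument would break.
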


\begin{IEEEproof}
    See Appendix~\ref{apx:lattice_goodness}. This proof closely follows the steps by Forney in \cite{forney2000} instead of the Loeliger's proof in \cite{loeliger97}. 
\end{IEEEproof}
}

\begin{remark}
    When proving the Poltyrev-goodness, unlike Construction A lattices with codes over $\mbb{F}_p$ letting $p\rightarrow\infty$ and Construction D lattices \cite{BarnesSloane83} \cite{forney2000} letting $L\rightarrow\infty$, for Construction $\pi_A$, we let $\Pi_{l=1}^L p_l\rightarrow \infty$. Thus, $L$ and $p_1,p_2,\ldots,p_L$ are parameters that can be chosen. This construction allows us to achieve the Poltyrev-limit with a significantly lower decoding complexity compared to Construction A lattices as now the complexity is not determined by the number of elements in $\Lambda^*$ but by the greatest prime divisor in the prime factorization of $|\Lambda^*|$. However, the complexity is higher than that of the Construction D lattices whose complexity is always determined by coding over $\mbb{F}_2$. This is a direct consequence of the fact that all primes should be distinct in Construction $\pi_A$.
\end{remark}

\begin{remark}
    The possible values of $|\Lambda^*|$ are confined in a subset of $\mbb{N}$. For example, for $\mbb{Z}$, Constellation $\pi_A$ allows $|\Lambda^*|$ to be any square-free integer \cite{sloaneOEIS}. Nonetheless, the choices of such $|\Lambda^*|$ are very rich and absorb Construction A lattices as special cases. There are many interesting things one can say about square-free integers. For example, the asymptotic density of square-free integers in $\mbb{Z}$ is $6/\pi^2\approx 0.6079$ which indicates that for a large portion of elements (infinitely many) in $\mbb{N}$, Construction $\pi_A$ can be used. Moreover, the asymptotic density of primes $p$ such that $p-1$ is square-free equals the Artin constant $A\approx 0.3739$ which indicates that for a large portion (infinitely many) of $p$, Construction A over $p$ can be replaced by Construction $\pi_A$ over $p-1$ to for reducing decoding complexity at a cost of a very small rate loss. The interested reader is referred to \cite{MoreeHommerson03}.
\end{remark}

{\black So far, we have only talked about lattices instead of lattice codes. In what follows, we use the proposed multilevel lattices in conjunction with sphere shaping for transmission over the AWGN channel. We follow the steps of Loeliger \cite{loeliger97} and obtain the following corollary.
\begin{corollary}
    For the AWGN channel, there exists a sequence of lattice codes that can achieve $R=\frac{1}{2}\log\left(\frac{P}{\eta^2}\right)$ under multistage decoding.
\end{corollary}
\begin{IEEEproof}
    Let $\mc{S}\subset\mbb{R}^n$ be the spherical shaping region with radius $\sqrt{NP}$ and let $\Lambda'$ be a good construction $\pi_A$ lattice from Theorem~\ref{thm:goodness}. Let $M$ be the desired number of codewords in the codebook. We scale $\Lambda'$ to obtain $\Lambda$ such that $\VOL(\mc{V}_{\Lambda})=\VOL(\mc{S})/M$. Note that scaling does not ruin the goodness for channel coding and thus $\Lambda$ is also good for channel coding.
    From \cite[Lemma 2]{loeliger97}, there exists a $\mathbf{v}\in\mbb{R}^N$ such that $|(\mathbf{v}+\Lambda)\cap\mc{S}|\geq \VOL(\mc{S})/\VOL(\mc{V}_{\Lambda})=M$. We pick such a $\mathbf{v}$ and adopt the translation $(\mathbf{v}+\Lambda)\cap\mc{S}$ as our lattice codebook for transmission. By construction, every codeword lies inside $\mc{S}$ and thus satisfies the power constraint.

    Now, by the law of large numbers, we know that with high probability, the noise vector $\mathbf{z}$ will lie inside the typical noise ball $\mc{B}(r)$ where $r=\sqrt{(1+\delta')N\eta^2}$ for a $\delta'>0$. This typical ball has the volume
    \begin{equation}
        \VOL(\mc{B}(r)) \approx \frac{\left((1+\delta')2\pi e \eta^2\right)^{\frac{N}{2}}}{\sqrt{N\pi}},
    \end{equation}
    where the approximation is due to Stirling's approximation. Since $\Lambda$ is good for channel coding, the decoding probability of error vanishes if we pick $\Lambda$ having
    \begin{equation}
        \VOL(\mc{V}_{\Lambda}) \approx \frac{\left((1+\delta)2\pi e \eta^2\right)^{\frac{N}{2}}}{\sqrt{N\pi}},
    \end{equation}
    for a $\delta>\delta'>0$. Therefore, arbitrarily reliable transmission is possible with $(\mathbf{v}+\Lambda)\cap\mc{S}$ as long as
    \begin{align}
        R &= \frac{1}{N}\log\left(M\right) \nonumber \\
        &= \frac{1}{2}\log\left(\frac{\VOL(\mc{S})}{\VOL(\mc{V}_{\Lambda})}\right)^{\frac{2}{N}} \nonumber \\
        &\approx \frac{1}{2}\log\left(\frac{(2\pi e P)/(N\pi)^{1/N}}{(1+\delta) (2\pi e\eta^2)/(N\pi)^{1/N}}\right) \nonumber \\
        &= \frac{1}{2}\log\left(\frac{P}{(1+\delta)\eta^2}\right).
    \end{align}
    Also note that this rate can be achieved by multistage decoding as $\Lambda$ is good for channel coding under multistage decoding. Letting $N\rightarrow\infty$ and $\delta\rightarrow 0$ completes the proof.
\end{IEEEproof}

\begin{remark}
    Although good lattices with sphere shaping adopted in \cite{loeliger97} and above can achieve the AWGN capacity in the asymptotically high SNR regime, it plays a little role in the recent breakthroughs of exploiting the channel structures via lattice structures. It is mainly because the spherical shaping cannot guarantee an isomorphism between messages and lattice codewords, which has been shown crucial for applications such as compute-and-forward \cite{nazer2011CF}. Later in Section~\ref{sec:AWGN}, we will discuss how to construct nested lattice codes from Construction $\pi_A$ lattices. This technique adopts nested lattice shaping and guarantees an isomoprhism.
\end{remark}
}

\section{Comparison with Construction D and Construction by Code Formula}\label{sec:compare_D}
We compare and contrast the Construction $\pi_A$ lattices and the Construction D lattices \cite{BarnesSloane83} \cite[Page 232]{conway1999sphere}. In order to make a detailed comparison, we first summarize Construction D in the following. Let $C^1 \subseteq C^2 \subseteq  \ldots \subseteq C^{L+1}$ be a sequence of nested linear codes over $\mbb{F}_p$ where $C^{L+1}$ is the trivial $(N,N)$-code and $C^l$ is a $(N,m^l)$-code for $l\in\{1,2,\ldots L\}$ with $m^1 \leq \ldots \leq m^L$. The codes are guaranteed to be nested by choosing $\{ \mathbf{g}_1,\ldots,\mathbf{g}_N \}$ which spans $C^{L+1}$ and then using the first $m^l$ vectors $\{ \mathbf{g}_1,\ldots,\mathbf{g}_{m^l} \}$ to generate $C^l$. We are now ready to state Construction D of lattices.

{\bf \underline{Construction D}} A lattice $\Lambda_{\text{D}}$ generated by Construction D with $L+1$ level is given as follows.
{\black \begin{align}
    \Lambda_{\text{D}} = \bigcup &\left\{ p^L \mbb{Z}^N + \sum_{1\leq l \leq L} p^{l-1} \sum_{1\leq i \leq m^l} a_{li} \mathbf{g}_i\right. \nonumber \\
    &\hspace{1in} \left.\vphantom{p^L \mbb{Z}^N + \sum_{1\leq l \leq L} p^{l-1} \sum_{1\leq i \leq m^l} a_{li} \mathbf{g}_i} \left|\vphantom{\sum} a_{li}\in \{0,1,\ldots, p-1\} \right. \right\},
\end{align}}
where all the operations are over $\mbb{R}^N$.

{\black Recently in \cite{YanLingWu13}, Construction D has been adopted together with nested polar codes to construct polar lattices. Polar lattice codes obtained from polar lattices with discrete Gaussian shaping is then shown to be capacity-achieving \cite{yan14polar}. Polar lattices are also shown to be able to achieve the rate distortion bound of memoryless Gaussian source in \cite{liu15polar_lossy}.} One variant of Construction D called Construction by Code Formula has attracted a lot of attention since its introduction by Forney in \cite{forney88}, see for example \cite{harshan12,KosiOggier13}. It is known that Construction by Code Formula does not always produce a lattice and it has been shown recently in \cite{KosiOggier13} that one requires the nested linear codes to be closed under Schur product in order to have a lattice. We summarize this construction in the following.

{\bf \underline{Construction by Code Formula}} A lattice $\Lambda_{\text{code}}$ generated by Construction by Code Formula over $\mbb{F}_p$ with $L+1$ levels is given as follows.
\begin{equation}
    \Lambda_{\text{code}} = C^1 + p C^2 + \ldots + p^{L-1} C^L + p^L \mbb{Z}^N. \label{eqn:code_formula}
\end{equation}

Both Construction D and Construction by Code Formula admit an efficient decoding algorithm as follows. The decoder first reduces the received signal by modulo $p\mbb{Z}$. This will get rid of all the contribution from $C^2, \ldots, C^{L+1}$ and the remainder is a codeword from the linear code $C^1$. After successfully decoding, the decoder reconstructs and subtracts out the contribution from $C^1$ and divides the results by $p$. Now the signal becomes a noisy version (with noise variance reduced by a factor of $p^2$) of a lattice point from a lattice generated by the same construction with only $L$ levels. So the decoder can then repeat the above procedure until all the codewords are decoded. In \cite{forney2000}, Forney \textit{et al.} showed that Construction D lattices together with the above decoding procedure achieves the sphere bound (Poltyrev-limit) and hence is good for channel coding.

At first glance, due to its multilevel nature, Construction $\pi_A$ looks similar to Construction D and Construction by Code Formula. Some important differences between Construction $\pi_A$ and the two constructions described above are discussed in the following.

\begin{enumerate}
    \item Construction $\pi_A$ relies solely on the ring (or $\mbb{Z}$-module) isomorphism while Construction D and Construction by Code Formula require the linear code at each level to be nested into those in the subsequent levels. The removal of such requirement makes the rate allocation and code construction much easier for the Construction $\pi_A$ lattices.
    \item {\black A fundamental difference is that Construction $\pi_A$ requires the codes used in different levels to be over different fields while Construction D allows them to be over the same field but requires the codes to be nested. Construction by code formula relaxes the nesting condition but may not always form a lattice.}
    \item The mapping from $C^1\times \ldots \times C^L$ to $\mbb{Z}/p^{L+1}\mbb{Z}$ in Construction D and Construction by Code Formula as a whole does not have the ring homomorphism property possessed by our Construction $\pi_A$. i.e., integer linear combinations of lattice points may not correspond to linear combination of codewords over $\mbb{F}_p$ for $C^1,\ldots,C^L$. The lack of ring homomorphisms renders these two constructions not straightforward to be used for applications such as compute-and-forward \cite{nazer2011CF}. However, if one does not insist on working over finite fields, Construction D can again be used for compute-and-forward. Please see the following remark.
\end{enumerate}

\begin{remark}\label{rmk:const_D_ring}
    In \cite[Proposition 2]{feng11rings}, Feng \textit{et al.} shows that Construction D can be viewed as Construction A with a code over the finite chain ring $\mbb{Z}_{p^{L-1}}$. Thus, if one would code over the ring $\mbb{Z}_{p^{L-1}}$, compute-and-forward can still be carried out in the ring level. Take \eqref{eqn:code_formula} for example, although we may not be able to know the codeword in each $C^l$, $l\in\{1,\ldots,L\}$, we will know the sum (weighted by $p^l$) as an element in the ring $\mbb{Z}_{p^{L-1}}$.
\end{remark}

\section{Construction $\pi_D$ lattices}\label{sec:const_pi_D}
Motivated by the observation made in Remark~\ref{rmk:const_D_ring}, we now consider a generalization of the Construction $\pi_A$ lattices. This generalization substantially enlarges the design space and further contains Construction D as a special case. We refer to this generalization as Construction $\pi_D$. The main enabler of this generalization is the following proposition.

\begin{proposition}\label{prop:ring_iso_pi_D}
    Let $q\in\mbb{N}$ be any natural number whose prime factorization is given by $q=\Pi_{l=1}^L p_l^{e_l}$. There exists a ring isomorphism $\mc{M}:\times_{l=1}^L \mbb{Z}_{p_l^{e_l}}\rightarrow \mbb{Z}/ q \mbb{Z}$. Moreover,
    $\sigma=\mc{M}^{-1}\circ\hspace{-3pt}\mod q\mbb{Z}$ is a ring homomorphism.
\end{proposition}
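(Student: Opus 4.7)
The plan is to essentially repeat the argument of Proposition~\ref{prop:ring_iso}, but without the final step that invoked Lemma~\ref{lma:MAX} to identify each quotient $\mbb{Z}/p_l\mbb{Z}$ with the field $\mbb{F}_{p_l}$. Since here the moduli are prime powers rather than primes, the quotients $\mbb{Z}/p_l^{e_l}\mbb{Z}$ are simply rings (not fields in general), which is exactly the ring $\mbb{Z}_{p_l^{e_l}}$ appearing in the statement. The rest of the chain of isomorphisms from \eqref{eqn:isomorphic} carries over verbatim.

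First I would verify that the ideals $p_l^{e_l}\mbb{Z}$, for $l=1,\ldots,L$, are pairwise relatively prime in $\mbb{Z}$. Since the $p_l$ are distinct primes, $\gcd(p_l^{e_l},p_j^{e_j})=1$ for $l\neq j$, so B\'ezout gives $p_l^{e_l}\mbb{Z}+p_j^{e_j}\mbb{Z}=\mbb{Z}$, which is the definition of relatively prime ideals used in the preliminaries. By the discussion preceding Lemma~\ref{lma:CRT} (the identity $\mc{I}_1\mc{I}_2=\mc{I}_1\cap\mc{I}_2$ for relatively prime ideals), extended inductively to $L$ ideals, this gives
\begin{equation}
    q\mbb{Z}=\prod_{l=1}^L p_l^{e_l}\mbb{Z}=\bigcap_{l=1}^L p_l^{e_l}\mbb{Z}.
\end{equation}

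Next, I would invoke the Chinese remainder theorem (Lemma~\ref{lma:CRT}) to obtain
\begin{equation}
    \mbb{Z}/q\mbb{Z}\;\cong\;\mbb{Z}\Big/\bigcap_{l=1}^L p_l^{e_l}\mbb{Z}\;\cong\;\prod_{l=1}^L \mbb{Z}/p_l^{e_l}\mbb{Z}\;=\;\prod_{l=1}^L \mbb{Z}_{p_l^{e_l}},
\end{equation}
where the last equality is just the standard identification of $\mbb{Z}/p_l^{e_l}\mbb{Z}$ with $\mbb{Z}_{p_l^{e_l}}$. Inverting this isomorphism yields the desired $\mc{M}\from\times_{l=1}^L\mbb{Z}_{p_l^{e_l}}\to\mbb{Z}/q\mbb{Z}$. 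Finally, since $\hspace{-3pt}\mod q\mbb{Z}\from\mbb{Z}\to\mbb{Z}/q\mbb{Z}$ is a natural ring homomorphism and $\mc{M}^{-1}$ is a ring isomorphism, their composition $\sigma=\mc{M}^{-1}\circ\hspace{-3pt}\mod q\mbb{Z}$ is a ring homomorphism.

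There is no real obstacle here: the argument is strictly a generalization of Proposition~\ref{prop:ring_iso} in which the primality assumption on $p_l$ was only used to promote $\mbb{Z}/p_l\mbb{Z}$ to a field via Lemma~\ref{lma:MAX}, and that step is simply omitted. The only thing worth double checking is the relative primality of the prime-power ideals, which is immediate. As in the proposition for Construction~$\pi_A$, an explicit $\mc{M}$ can be produced via B\'ezout's identity applied to $q_l\defeq q/p_l^{e_l}$ and the resulting coefficients $a_l$, giving $\mc{M}(v^1,\ldots,v^L)=\sum_{l=1}^L a_l q_l v^l\hspace{-3pt}\mod q\mbb{Z}$ with $v^l\in\mbb{Z}_{p_l^{e_l}}$, but this constructive form is not needed for the proof.
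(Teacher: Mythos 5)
Your proof is correct and is essentially the argument the paper intends: the authors simply write ``Similar to the proof of Proposition~\ref{prop:ring_iso},'' and you have carried that out explicitly by verifying relative primality of the ideals $p_l^{e_l}\mbb{Z}$, applying CRT, and dropping the field-identification step from Lemma~\ref{lma:MAX} that no longer applies. Nothing is missing.
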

\begin{IEEEproof}
    Similar to the proof of Proposition~\ref{prop:ring_iso}.
\end{IEEEproof}

\textbf{\underline{Construction $\pi_D$}} Let $q\in\mbb{N}$ whose prime factorization is given by $q=\Pi_{l=1}^L p_l^{e_l}$. Let $m^l$, $N$ be integers such that $m^l\leq N$ and let $\mathbf{G}^l$ be a generator matrix of an $(N,m^l)$ linear code over $\mbb{Z}_{p_l^{e_l}}$ for $l\in\{1,\ldots,L\}$. Construction $\pi_D$ consists of the following steps,

\begin{enumerate}
    \item Define the discrete codebooks $C^l=\{\mathbf{x}=\mathbf{G}^l\odot\mathbf{w}^l:\mathbf{w}^l\in(\mbb{Z}_{p_l^{e_l}})^{m^l}\}$ for $l\in\{1,\ldots,L\}$.
    \item Construct $\Lambda^* \defeq \mc{M}(C^1,\ldots,C^L)$ where $\mc{M}:\times_{l=1}^L \mbb{Z}_{p_l^{e_l}}\rightarrow \mbb{Z}/ q \mbb{Z}$ is a ring isomorphism.
    \item Tile $\Lambda^*$ to the entire $\mbb{R}^N$ to form $\Lambda_{\pi_D} \triangleq  \Lambda^* + q \mbb{Z}^N$.
\end{enumerate}
Similar to $\Lambda_{\pi_A}$, it can be shown that a real vector $\boldsymbol\lambda$ belongs to $\Lambda_{\pi_D}$ if and only if $\sigma(\boldsymbol\lambda)\in C^1\times\ldots\times C^L$ where $\sigma\defeq \mc{M}^{-1}\circ\hspace{-3pt}\mod q\mbb{Z}$ is a ring homomorphism. Note that in the step 1) of the Construction $\pi_D$ procedure, we use coding over the finite chain ring $\mbb{Z}_{p_l^{e_l}}$ for the level $l$. Thus, thanks to \cite[Proposition 2]{feng11rings}, this subsumes the Construction D procedure and hence one can implement Construction D for each level. In the following, we provide an example with two levels where the first one is over a finite chain ring and the second one is over a finite field.

\begin{example}
Let us consider a two level example with $q=12 = 2^2\cdot 3$. One has $\Z/12\Z\cong \mbb{Z}_4 \times \mbb{F}_3$ and a ring isomorphism $\mc{M}(0,0)= 0$, $\mc{M}(1,1)= 1$, $\ldots$, $\mc{M}(3,2)= 11$. Let us choose
\begin{equation}
    \mathbf{G}^1 = \begin{bmatrix}
                     0 & 1 \\
                     1 & 1 \\
                   \end{bmatrix},
\end{equation}
over $\mbb{Z}_4$ and $\mathbf{G}^2=[1,1]^T$ over $\mbb{F}_3$. Also, since $\mbb{Z}_4$ is a finite chain ring, we can apply the Construction D procedure for the first level. In Fig.~\ref{fig:pi_D_lattice1}, we show the step 2) of Construction $\pi_D$ where we use the ring isomorphism $\mc{M}$ to modulate the codewords onto $\Z/12\Z$ to form $\Lambda^*$. Each element $\boldsymbol\lambda^*\in\Lambda^*$ is a coset representative of the coset $\boldsymbol\lambda^* + 12 \Z^2$. In Fig.~\ref{fig:pi_D_lattice2}, we further tile the coset representatives to the entire $\mbb{R}^2$; this corresponds to the step 3) in Construction $\pi_A$. An illustration of Construction $\pi_D$ for this particular example can be found in Fig.~\ref{fig:pi_D_lattice12}.
\begin{figure}
    \centering
    \includegraphics[width=3.5in]{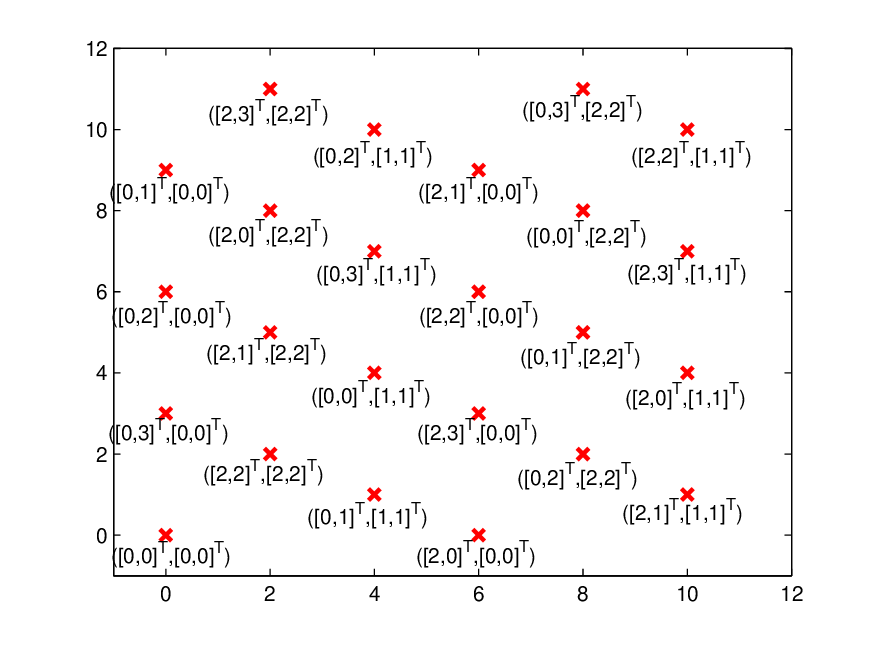}
    \caption{The set of coset representatives $\Lambda^*$ generated by $\mc{M}(C^1,C^2)$ where the corresponding codewords are also shown.}
    \label{fig:pi_D_lattice1}
\end{figure}
\begin{figure}
    \centering
    \includegraphics[width=3.5in]{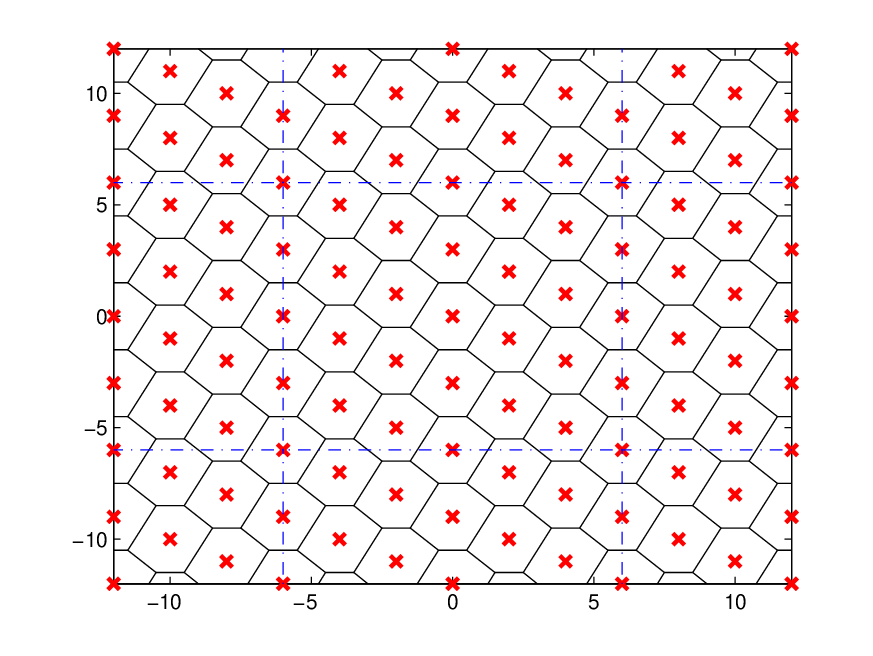}
    \caption{Tiling $\Lambda^*$ to the entire $\mbb{R}^2$ to form $\Lambda$. i.e., $\Lambda = \Lambda^*+12\Z^2$.}
    \label{fig:pi_D_lattice2}
\end{figure}
\begin{figure}
    \centering
    \includegraphics[width=2.5in]{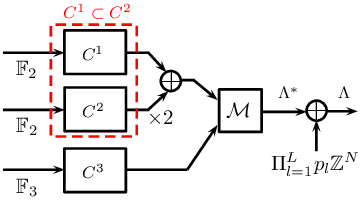}
    \caption{An example of the Construction $\pi_D$ lattice.}
    \label{fig:pi_D_lattice12}
\end{figure}
\end{example}

Note that when setting $L=1$ and $e_1=1$, Construction $\pi_D$ reduces to Construction A over a finite field $\mbb{F}_{p_1}$. Setting $L=1$ makes it Construction A over a finite chain ring $\mbb{Z}_{p_1^{e_1}}$, which subsumes Construction D as a special case. Finally, when setting $e_1=\ldots=e_L=1$, we obtain Construction $\pi_A$. Hence, Construction $\pi_D$ is a general means of constructing lattices from codes and contains Construction A, Construction D, and Construction $\pi_A$ as special cases. Moreover, Construction $\pi_D$ is more than these three special cases. Particularly, $q$ can take any natural number \textit{regardless} its prime factorization. Thus,  Construction $\pi_D$ substantially expands the design space and further eases the rate allocation problem.

To show the ability to produce Poltyrev good lattices, for Construction $\pi_D$, one can follow the proof in Theorem~\ref{thm:goodness} with a careful treatment to those levels with $e_l\neq 1$. One option is to use Construction D for those levels, i.e., one uses a sequence of $e_l$ nested linear codes to construct a linear code over $\mbb{Z}_{p_l^{e_l}}$. Another option is to adopt a capacity-achieving linear code over $\mbb{Z}_{p_l^{e_l}}$ proposed in \cite{HitronErez12} at the $l$th level. 

\section{Discussions}\label{sec:discussion}
In this section, we first point out that ring isomorphisms are \textit{not necessary} and $\mbb{Z}$-module isomorphisms suffice in order to get a lattice. This is practically relevant as this increases the design space. An easy way to generate a $\mbb{Z}$-module isomorphism is also provided which closely follows the set partition rule of Ungerboeck \cite{Ungerboeck82}. We then provide a brief comparison of decoding complexity between the Construction $\pi_A$ lattices and the Construction A lattices with codes over prime fields.

\subsection{$\mbb{Z}$-Module Isomorphisms Suffice}
One may have already noticed that in Proposition~\ref{prop:lattice} and Theorem~\ref{thm:goodness}, we only use the fact that $\mc{M}$ is a $\mbb{Z}$-module isomorphism instead of a ring isomorphism. In fact, since a lattice is a free $\mbb{Z}$-module so the requirement of ring isomorphisms may be too strong and $\mbb{Z}$-module isomorphisms suffice. However, the requirement of ring isomorphisms appears to be imperative for some applications such as compute-and-forward \cite{huangisit14}. In the sequel, we discuss Construction $\pi_A$ with $\mbb{Z}$-module isomorphisms.

We begin by noting that if we regard the both sides of \eqref{eqn:isomorphic} as finitely-generated Abelian groups, i.e., $\mbb{Z}$-modules, one has that the following $\mbb{Z}$-module homomorphisms exists
\begin{equation}\label{eqn:module_homo_prod}
    \varphi: \mbb{Z} \overset{\hspace{-3pt}\mod \Pi_{l=1}^L p_l\mbb{Z}}{\rightarrow} \mbb{Z}/\Pi_{l=1}^L p_l\mbb{Z} \overset{\mc{M}^{-1}}{\rightarrow} \mbb{Z}/p_1\mbb{Z}\times\ldots\times\mbb{Z}/p_L\mbb{Z},
\end{equation}
where now $\mc{M}$ is a $\mbb{Z}$-module isomorphism. One can show that Construction $\pi_A$ with this $\mc{M}$ in the step 2) would result in a lattice $\Lambda$ and $\boldsymbol\lambda\in\Lambda$ if and only if $\varphi(\boldsymbol\lambda)\in C^1\times\ldots\times C^L$. In the following, we provide an explicit construction of a $\mbb{Z}$-module isomorphism and give an example in $\Zw$ which relates the proposed multilevel lattices to the Ungerboeck set partitions \cite{Ungerboeck82}.

\begin{theorem}
    Let $p_1,\ldots,p_L$ be a collection of primes which are relatively prime. The following mapping
    \begin{equation}\label{eqn:general_homo}
        \mc{M}(v^1,\ldots,v^L) \defeq \sum_{l=1}^L v^l\Pi_{l'=1,l'\neq l}^L p_{l'} \hspace{-3pt}\mod \Pi_{l=1}^L p_l\mbb{Z},
    \end{equation}
    where $v^l\in\mbb{F}_{p_l}$, is a $\mbb{Z}$-module isomorphism from $\times_{l=1}^L \mbb{F}_{p_l}$ to $\mbb{Z}/\Pi_{l=1}^L p_l\mbb{Z}$. Therefore, $\varphi\defeq \mc{M}^{-1}\circ \hspace{-3pt}\mod\Pi_{l=1}^L\Zw$ is a $\mbb{Z}$-module homomorphism.
\end{theorem}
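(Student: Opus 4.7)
The plan is to verify three properties of $\mc{M}$ in \eqref{eqn:general_homo}: well-definedness on $\times_{l=1}^L \mbb{F}_{p_l}$, $\mbb{Z}$-linearity, and bijectivity. Throughout I will write $q \defeq \Pi_{l=1}^L p_l$ and $q_l \defeq q/p_l = \Pi_{l' \neq l} p_{l'}$, so the formula reads $\mc{M}(v^1,\ldots,v^L) = \sum_l v^l q_l \hspace{-3pt}\mod q\mbb{Z}$; the single identity $p_l q_l = q$ will carry most of the argument.

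First I would check that the image does not depend on the integer representative chosen for each $v^l \in \mbb{F}_{p_l}$: replacing $v^l$ by $v^l + k p_l$ contributes $k p_l q_l = k q \equiv 0 \pmod q$, so the shift is absorbed. The same identity yields $\mbb{Z}$-linearity: if $n \in \mbb{Z}$ and $(u^1,\ldots,u^L) = n(v^1,\ldots,v^L) + (w^1,\ldots,w^L)$ in $\times_l \mbb{F}_{p_l}$, then integer representatives satisfy $u^l = n v^l + w^l + k_l p_l$ for some $k_l \in \mbb{Z}$, so $\sum_l u^l q_l = n \sum_l v^l q_l + \sum_l w^l q_l + (\sum_l k_l) q$, which reduces mod $q$ to $\mc{M}(u^1,\ldots,u^L) \equiv n\mc{M}(v^1,\ldots,v^L) + \mc{M}(w^1,\ldots,w^L)$.

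Next I would establish bijectivity. Since both source and target have cardinality $q$, injectivity suffices. Suppose $\mc{M}(v^1,\ldots,v^L) \equiv 0 \pmod q$, and hence in particular $\equiv 0 \pmod{p_j}$ for every $j$. Because $p_j \mid q_l$ for every $l \neq j$, only the $l = j$ term survives the reduction mod $p_j$, leaving $v^j q_j \equiv 0 \pmod{p_j}$. Pairwise coprimality of the $p_l$'s forces $\gcd(q_j, p_j) = 1$, so $q_j$ is a unit in $\mbb{F}_{p_j}$ and therefore $v^j = 0$ for every $j$, proving injectivity.

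Given $\mc{M}$, the claim about $\varphi$ is automatic: $\hspace{-3pt}\mod \Pi_l p_l \mbb{Z}$ is a $\mbb{Z}$-module homomorphism from $\mbb{Z}$ to $\mbb{Z}/q\mbb{Z}$ and $\mc{M}^{-1}$ is a $\mbb{Z}$-module isomorphism, so their composition $\varphi$ is itself a $\mbb{Z}$-module homomorphism. The only real obstacle is the bookkeeping mismatch that addition on the source is componentwise modulo each individual $p_l$ while on the target it is modulo $q$; this is resolved uniformly by the observation $p_l q_l = q$, which causes every cross-level discrepancy to vanish in the quotient.
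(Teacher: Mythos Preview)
Your proof is correct and rests on the same key identity the paper uses, namely $p_l q_l = q$, which makes every level-$l$ carry vanish modulo $q$. In fact your argument is more complete than the paper's, which only verifies the additive homomorphism identity $\mc{M}(v_1^1,\ldots,v_1^L)+\mc{M}(v_2^1,\ldots,v_2^L)=\mc{M}(v_1^1\oplus v_2^1,\ldots,v_1^L\oplus v_2^L)$ and leaves well-definedness and bijectivity implicit; your explicit injectivity step via reduction mod each $p_j$ fills that gap cleanly.
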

\begin{IEEEproof}
    Let $v^l_k\in \mbb{F}_{p_l}$ for $k\in\{1,2\}$ and $l\in\{1,\ldots,L\}$. Consider
    \begin{equation}
        \mc{M}(v^1_k,\ldots,v^L_k) = \sum_{l=1}^L v^l_k\Pi_{l'=1,l'\neq l}^L p_{l'} \hspace{-3pt}\mod\Pi_{l=1}^L p_l\mbb{Z}.
    \end{equation}
    One has that
    \begin{align}
        &\mc{M}(v^1_1,\ldots,v^L_1) + \mc{M}(v^1_2,\ldots,v^L_2) \hspace{-3pt}\mod\Pi_{l=1}^L p_l\mbb{Z} \nonumber \\
        &= \sum_{l=1}^L (v^l_1+v^l_2)\Pi_{l'=1,l'\neq l}^L p_{l'} \hspace{-3pt}\mod\Pi_{l=1}^L p_l\mbb{Z} \nonumber \\
        &= \sum_{l=1}^L (v^l_1\oplus v^l_2+\boldsymbol\zeta_l p_l)\Pi_{l'=1,l'\neq l}^L p_{l'} \hspace{-3pt}\mod\Pi_{l=1}^L p_l\mbb{Z} \nonumber \\
        &= \sum_{l=1}^L (v^l_1\oplus v^l_2)\Pi_{l'=1,l'\neq l}^L p_{l'} + \sum_{l=1}^L \boldsymbol\zeta_l \Pi_{l=1}^L p_{l} \hspace{-3pt}\mod\Pi_{l=1}^L p_l\mbb{Z} \nonumber \\
        &= \sum_{l=1}^L (v^l_1\oplus v^l_2)\Pi_{l'=1,l'\neq l}^L \phi_{l'} \hspace{-3pt}\mod\Pi_{l=1}^L p_l\mbb{Z} \nonumber\\
        &= \mc{M}(v^1_1\oplus v^1_2,\ldots,v^L_1\oplus v^L_2),
    \end{align}
    where $\boldsymbol\zeta_l \in \mbb{Z}^N$.
\end{IEEEproof}
It should be noted that there exist many other $\mbb{Z}$-module homomorphism and the design space is quite large. We now provide an example in $\Zw$ and relate the above construction of $\mbb{Z}$-module isomorphism to the Ungerboeck set partitions \cite{Ungerboeck82}.

\begin{example}\label{exp:40pt_z_module}
    Consider $\Zw$ the ring of Eisenstein integers. Let $\phi_1=3+2\omega$ and $\phi_2=1-2\omega$. One can verify that both $\phi_1$ and $\phi_2$ are Eisenstein primes with $|\phi_1|^2=|\phi_2|^2=7$ and $\phi_1$ and $\phi_2$ are relatively prime. Thus, we have $\Zw/\phi_1 \phi_2\Zw\cong \mbb{F}_7\times\mbb{F}_7$. The above algorithm would produce a $\mbb{Z}$-module isomorphism given by
    \begin{equation}\label{eqn:module_homo}
        \mc{M}(v^1,v^2) \triangleq \phi_2 v^1 + \phi_1 v^2 \hspace{-3pt}\mod \phi_1 \phi_2\Zw.
    \end{equation}
    where $v^1, v^2\in\mbb{F}_q$. This isomorphism is shown in Fig.~\ref{fig:z_module_49} where the first and second digits represent elements in the first and second fields, respectively. One observes that this mapping closely follows the set partition rules of Ungerboeck that the minimum intra-subset distance is maximized when partitioning at each level.
    \begin{figure}
    \centering
    \includegraphics[width=3.5in]{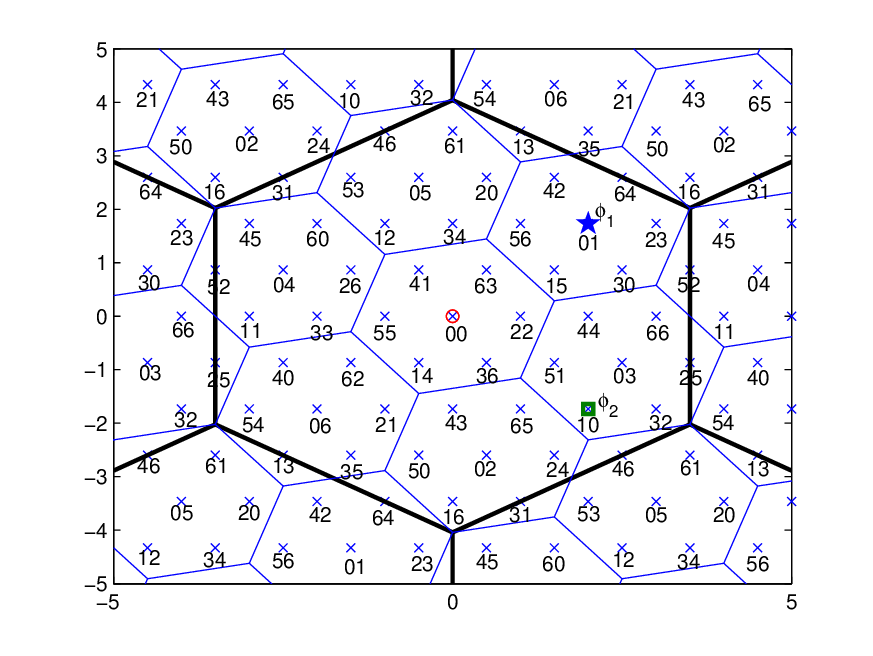}
    \caption{An example of the $\mbb{Z}$-module isomorphism in \eqref{eqn:general_homo} from $\Zw$ with $\phi_1 = 3+2\omega$ and $\phi_2 = 1-2\omega$.}
    \label{fig:z_module_49}
    \end{figure}
\end{example}

\subsection{Comparison of Complexity}
To emphasize the advantage of Construction $\pi_A$ lattices over Construction A lattices, in Fig.~\ref{fig:complexity}, we present a rough comparison of the decoding complexity between lattices from these two constructions. The underlying linear codes are chosen to be non-binary LDPC codes. Recall that for Construction A lattices the decoding complexity is dominated by $|\Lambda^*|$ while for the Construction $\pi_A$ lattices over $\mbb{Z}$, it only depends on the greatest prime divisor of $|\Lambda^*|$. For coding over $\mbb{F}_p$, we assume that a $p$-ary LDPC code is implemented for which the decoding complexity is reported to be roughly $O(p\log(p))$ \cite{DaveyMackay}. Note that for Construction $\pi_A$, we exclude those lattices that can also be generated by Construction A and those lattices that would result in higher complexity than Construction A. Because for those parameters, one could just use Construction A. One observes in Fig.~\ref{fig:complexity} that Construction $\pi_A$ significantly reduces the decoding complexity. Moreover, one can expect the gain to be larger as the constellation size $|\Lambda^*|$ increases.

The same comparison is also performed for lattices over $\Zi$ and over $\Zw$. Note that allowing lattices over such rings of integers enlarges the design space and may further decrease the decoding complexity. For example, $|\Lambda^*|=25$ was not an option for Construction $\pi_A$ over $\mbb{Z}$; however, we know that $5\Zi$ splits into two prime ideals $(1+2i)\Zi$ and $(1-2i)\Zi$. Moreover, these two prime ideals are relatively prime so the CRT gives
\begin{align}
    \Zi/5\Zi &\cong \Zi/(1+2i)\Zi \times \Zi/(1-2i)\Zi \nonumber \\
    &\cong \mbb{F}_5\times \mbb{F}_5.
\end{align}
One can use Construction $\pi_A$ over $\Zi$ with these two prime ideals. The resulted lattice would have decoding complexity dominated by coding over $\mbb{F}_5$. Another example can be found when $|\Lambda^*|=49$ which was not an option for Construction $\pi_A$ over $\mbb{Z}$. However, $7\Zw = (2+3\omega)(-1-3\omega)\Zw$. Hence, the CRT gives
\begin{align}
    \Zw/7\Zw &\cong \Zw/(2+3\omega)\Zw \times \Zw/(-1-3\omega)\Zw \nonumber \\
    &\cong \mbb{F}_7\times \mbb{F}_7.
\end{align}
One can use Construction $\pi_A$ over $\Zw$ with prime ideals $(2+3\omega)\Zw$ and $(-1-3\omega)\Zw$. The {\black resulting} lattice would have decoding complexity dominated by coding over $\mbb{F}_7$.
\begin{figure}
    \centering
    \includegraphics[width=3.5in]{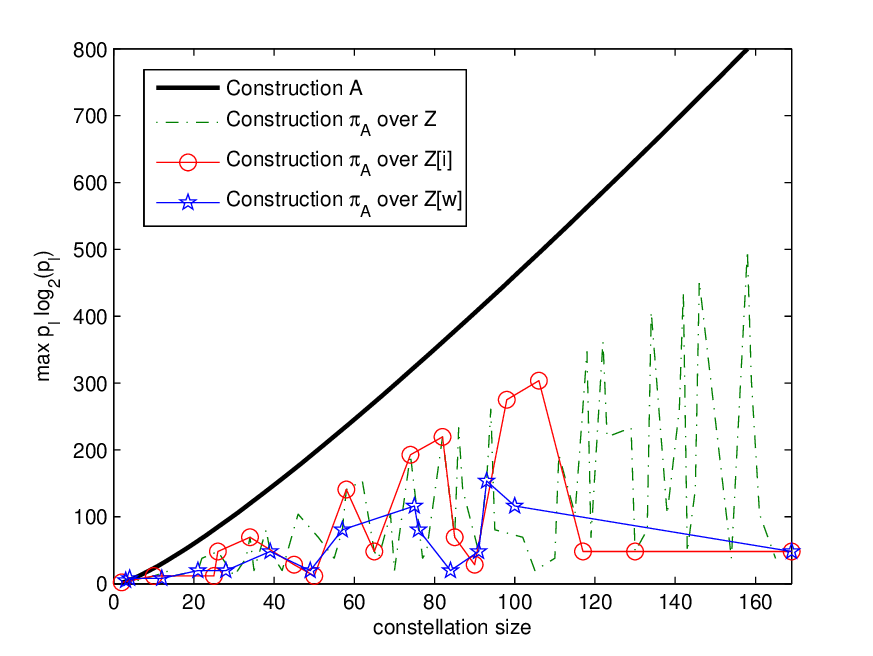}
    \caption{A rough comparison of decoding complexity for Construction A and Construction $\pi_A$ lattices.}
    \label{fig:complexity}
\end{figure}

\section{Low-Complexity Decoders}\label{sec:para_dec}
In Section~\ref{sec:prod_const}, it has been shown that Construction $\pi_A$ is able to produce a sequence of lattices that is Poltyrev-good under multistage decoding. We now propose two decoders which further take advantage of the additional structure of Construction $\pi_A$ lattices. Before starting, we note that the algorithms proposed here can be straightforwardly extended to Construction $\pi_D$ lattices but we only present the algorithms for Construction $\pi_A$ lattices for the sake of brevity. A remark (Remark~\ref{rmk:pi_D_decoding}) will be given later to discuss the extension to Construction $\pi_D$ lattices.

The key property that we exploit here is that from CRT, any $a\in\mbb{Z}$ can be uniquely represent as $a=\mc{M}(b^1,\ldots,b^L)+ \tilde{a}\cdot \Pi_{l=1}^L p_l$ where $b^l\in\mbb{F}_{p_l}$ and $\tilde{a}\in \mbb{Z}$ and
\begin{equation}\label{eqn:unique_pl}
    a\hspace{-3pt}\mod p_l = b^l.
\end{equation}
We now discuss the first proposed decoder which is referred to as the serial modulo decoder (SMD). This decoder is motivated by a decoding algorithm of Construction D lattices \cite{forney2000} and heavily relies on the additional structure \eqref{eqn:unique_pl} provided by CRT . The SMD first removes the contribution from all but the first level by performing $\hspace{-3pt}\mod p_1\mbb{Z}^N$ to form
\begin{align}\label{eqn:sub_dec_1}
    \mathbf{y}^1 &\defeq \mathbf{y} \hspace{-3pt}\mod p_1\mbb{Z}^N = \left(\mathbf{x} + \mathbf{z}\right) \hspace{-3pt}\mod p_1\mbb{Z}^N \nonumber \\
    &= \left( \mc{M}(\mathbf{c}^1,\ldots,\mathbf{c}^L) + \Pi_{l=1}^L p_l \boldsymbol\zeta + \mathbf{z}\right) \hspace{-3pt}\mod p_1\mbb{Z}^N \nonumber \\
    &\overset{(a)}{=} \left(\mathbf{c}^1 + \mathbf{z}\hspace{-3pt}\mod p_1\mbb{Z}^N\right)\hspace{-3pt}\mod p_1\mbb{Z}^N,
\end{align}
where (a) follows from the distributive property of the modulo operation and \eqref{eqn:unique_pl}. This procedure transforms the channel into a single level additive $\hspace{-3pt}\mod p_1\mbb{Z}^N$ channel. The decoder then forms $\mathbf{\hat{c}}^1$ an estimate of $\mathbf{c}^1$ from $\mathbf{y}^1$ by decoding the linear code $C^1$. {\black This transformation converts the AWGN channel into the $\hspace{-3pt}\mod p_1$ channel and thus is suboptimal; however, the loss is negligible in the high  SNR regime as mentioned in \cite{forney2000}.}

For the levels $s\in\{2,\ldots,L\}$, the decoder assumes all the previous levels are correctly decoded, i.e., $\mathbf{\hat{c}}^l=\mathbf{c}^l$ for $l<s$. It then subtracts all the contributions from the previously decoded levels from $\mathbf{y}$ to form
\begin{equation}\label{eqn:sub_after_s}
    \mc{M}(\mathbf{0},\ldots,\mathbf{0},\mathbf{c}^s,\ldots,\mathbf{c}^L) + \Pi_{l=1}^L p_l \boldsymbol\zeta + \mathbf{z}.
\end{equation}
Note that both $\mc{M}(\mathbf{0},\ldots,\mathbf{0},\mathbf{c}^s,\ldots,\mathbf{c}^L)$ and $\Pi_{l=1}^L p_l \boldsymbol\zeta$ are multiples of $\Pi_{l=1}^{s-1} p_l$ and dividing \eqref{eqn:sub_after_s} by $\Pi_{l=1}^{s-1} p_l$ results in
\begin{equation}
    \mc{M}^s(\mathbf{c}^s,\ldots,\mathbf{c}^L) + \Pi_{l=s}^L p_l \boldsymbol\zeta + \mathbf{\tilde{z}}^s,
\end{equation}
where $\mc{M}^s$ is a bijective mapping from $\mbb{F}_{p_s}\times\ldots\times \mbb{F}_{p_L}$ to $\mbb{Z}/\Pi_{l=s}^L p_l \mbb{Z}$ and $\mathbf{\tilde{z}}^s\defeq \mathbf{z}/\Pi_{l=1}^{s-1} p_l$ whose elements are i.i.d. Gaussian distributed with zero mean and variance $\eta^2/(\Pi_{l=1}^{s-1} p_l)^2$. We can now again remove the contributions from the next levels to form
\begin{align}\label{eqn:sub_dec_s}
    \mathbf{\tilde{y}}^s &= \left(\mc{M}^s(\mathbf{c}^s,\ldots,\mathbf{c}^L) + \Pi_{l=s}^L p_l \boldsymbol\zeta + \mathbf{\tilde{z}}^s\right) \hspace{-3pt}\mod p_s\mbb{Z}^N \nonumber \\
    &\overset{(a)}{=} \left( d_s\odot\mathbf{c}^s+ \mathbf{\tilde{z}}^s\hspace{-3pt}\mod p_s\mbb{Z}^N\right) \hspace{-3pt}\mod p_s\mbb{Z}^N,
\end{align}
where (a) follows by the structure of mapping in \eqref{eqn:ring_iso} and $d_s = (a_s q_s/\Pi_{l=1}^{s-1}p_l)\hspace{-3pt}\mod p_s$. Note that since $C^s$ is linear, $d_s\odot \mathbf{c}_s\in C^s$. This procedure makes the channel experienced by the $s$th coded stream a single level additive $\hspace{-3pt}\mod p_s\mbb{Z}^N$ channel with noise variance reduced by a factor of $(\Pi_{l=1}^{s-1}p_{l})^2$. The decoder then forms $\mathbf{\hat{c}}^s$ an estimate of $\mathbf{c}^s$ from $\mathbf{\tilde{y}}^s$ by decoding the linear code $C^{s}$.

In the last level of decoding, one does not have to perform the modulo operation as there is only one level left. Therefore, the decoder at the last level directly decodes the uncoded integer $\boldsymbol\zeta$ by quantizing $\mathbf{\tilde{y}}^{L+1} \defeq \boldsymbol\zeta + \mathbf{\tilde{z}}^{L+1}$ to the nearest integer vector. We summarize the decoding procedure of the proposed SMD in Fig.~\ref{fig:sub_dec}.
\begin{figure}
    \centering
    \includegraphics[width=2.8in]{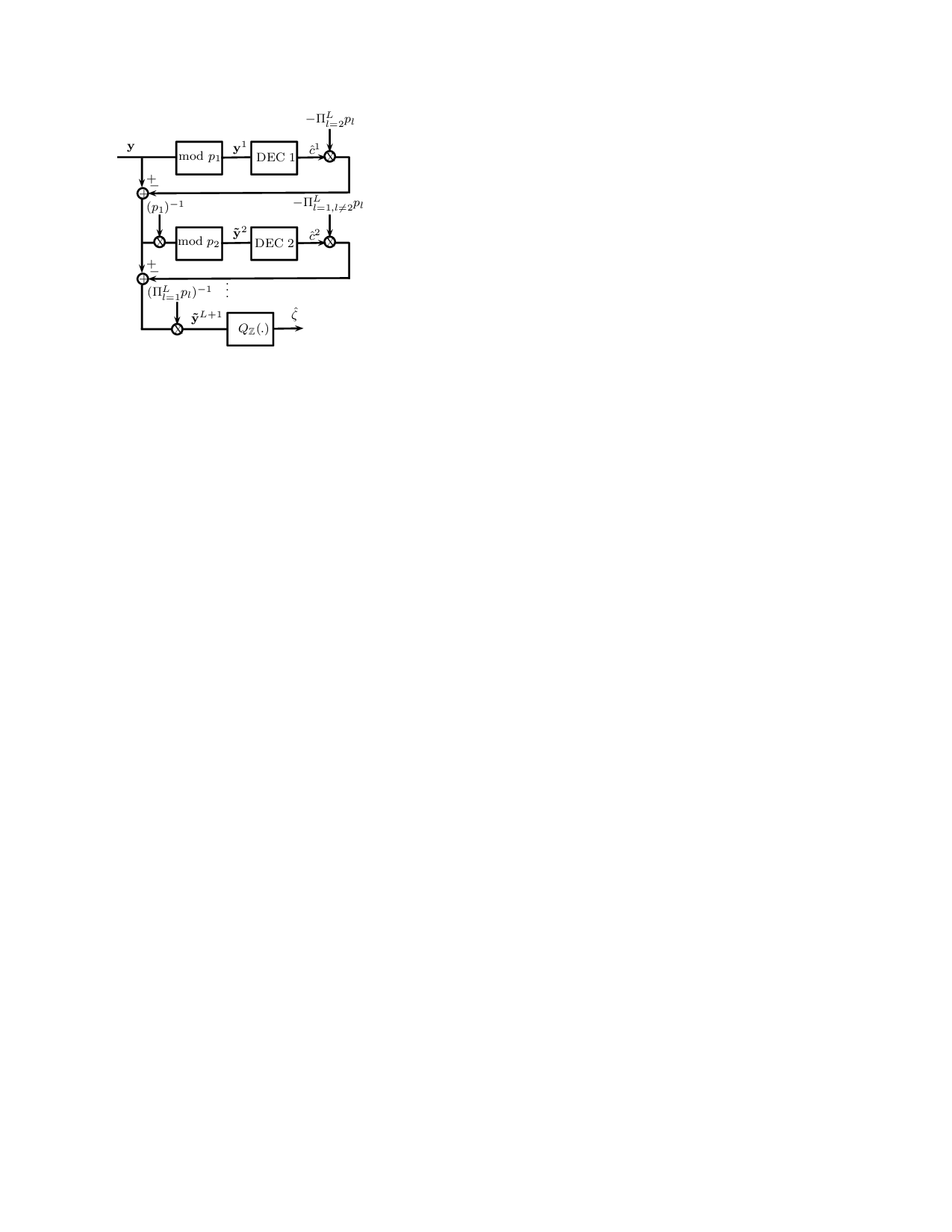}
    \caption{The proposed SMD decoder for Construction $\pi_A$ lattices.}
    \label{fig:sub_dec}
\end{figure}

We now propose another decoder which is very similar to the SMD but can be implemented in a parallel fashion. Thus, this decoder is referred to as the parallel modulo decoder (PMD). Due to its parallel nature, depending on the total number of levels $L$, this decoder can have substantially smaller latency than the multistage decoding and the SMD.

Note that, from \eqref{eqn:unique_pl}, $\mathbf{x}\hspace{-3pt}\mod p_l\mbb{Z}^N = \mathbf{c}^l$ for every $l\in\{1,\ldots,L\}$. For the PMD, we simultaneously form
\begin{align}\label{eqn:para_dec_s}
    \mathbf{y}^s &= \left(\mathbf{x} + \mathbf{z}\right) \hspace{-3pt}\mod p_s\mbb{Z}^N \nonumber \\
    &= \left( \mc{M}(\mathbf{c}^1,\ldots,\mathbf{c}^L) + \Pi_{l=1}^L p_l \boldsymbol\zeta + \mathbf{z}\right) \hspace{-3pt}\mod p_s\mbb{Z}^N \nonumber \\
    &\overset{(a)}{=} \left(\mathbf{c}^s + \mathbf{z}\hspace{-3pt}\mod p_s\mbb{Z}^N\right)\hspace{-3pt}\mod p_s\mbb{Z}^N,
\end{align}
for $s\in\{1\ldots,L\}$ where $(a)$ follows again from \eqref{eqn:unique_pl}. The decoder then directly forms $\hat{\mathbf{c}}^s$ an estimate of $\mathbf{c}^s$ from $\mathbf{y}^s$ by decoding the linear code $C^s$ for $s\in\{1,\ldots,L\}$. Now, instead of having a reduced noise $\mathbf{\tilde{z}}^s$ at the $s$th level as in \eqref{eqn:sub_dec_s}, the noise random variables before the modulo operations are the same for all the levels. Thus, the performance of the PMD would be worse than that of the SMD for a same Construction $\pi_A$ lattice. For the last step, the parallel decoder finds the uncoded integer $\boldsymbol\zeta$ from $\mathbf{y}^{L+1}\defeq \mathbf{\tilde{y}}^{L+1}$.

\begin{figure}
    \centering
    \includegraphics[width=3.2in]{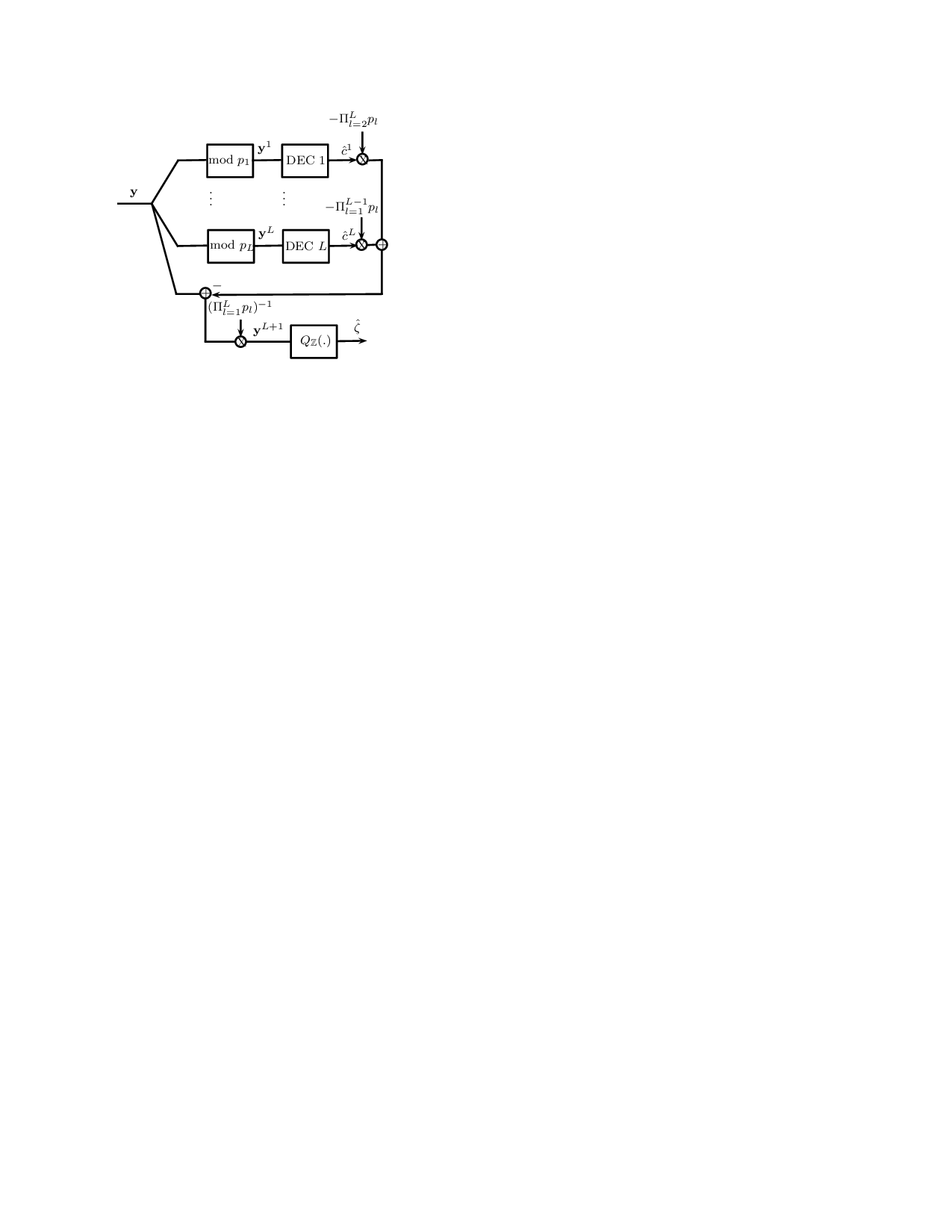}
    \caption{The proposed PMD decoder for Construction $\pi_A$ lattices.}
    \label{fig:para_dec}
\end{figure}

\begin{remark}
    Depending on the performance and latency requirements, one can implement a mixed decoder which is in between SMD and PMD in a fashion that some levels are implemented serial and others are implemented parallel.
\end{remark}

\begin{remark}\label{rmk:pi_D_decoding}
    Note that the proposed algorithms can be extended to decoding of lattices from Construction $\pi_D$ by first performing $\hspace{-3pt}\mod p_l^{e_l}\mbb{Z}$ at the $l$th level to get a noisy version of the $l$th codeword which is over the finite chain ring $\mbb{Z}_{p_l^{e_l}}$. Then the decoding problem becomes decoding of a Construction D lattice constructed over $\mbb{F}_{p_l}$ with $e_l$ levels and the decoding algorithm described in Section~\ref{sec:compare_D} can be used.
\end{remark}

We present some numerical results which consider using Construction $\pi_A$ lattices with the hypercube shaping over the AWGN channel. i.e., we consider the AWGN channel given by $\mathbf{y}' = \gamma \mathbf{x} + \mathbf{z}'$ where $\mathbf{x}$ is an element of $\Lambda\hspace{-3pt}\mod \Pi_{l=1}^L p_l \mbb{Z}^N$ a Construction $\pi_A$ lattice shaped by a hypercubic coarse lattice, $\gamma$ is for the power constraint, and $\mathbf{z}'$ is the additive Gaussian noise having distribution $\mc{N}(\mathbf{0},\mathbf{I})$. We can equivalently consider the model
\begin{equation}
    \mathbf{y} = \mathbf{x} + \mathbf{z},
\end{equation}
where $\mathbf{z}\defeq \mathbf{z'}/\gamma$ having covariance matrix $\mathbf{I}/\gamma^2$.

We now discuss the information rates achievable by different decoders. For the multistage decoder, one has
\begin{align}
    R_{\text{MSD}} &= I(\msf{X};\msf{Y}) \nonumber \\
        &\overset{(a)}{=} I(\msf{C}^1,\ldots,\msf{C}^L;\msf{Y}) \nonumber \\
        &\overset{(b)}{=} I(\msf{C}^1;\msf{Y})+\sum_{l=2}^L I(\msf{C}^l;\msf{Y}|\msf{C}^1,\ldots,\msf{C}^{l-1}),
\end{align}
where (a) is due to the fact that $\mc{M}$ is a ring isomorphism and hence is bijective and (b) follows from the chain rule of mutual information \cite{cover91}. The achievable information rates for the SMD and PMD can be analyzed similarly and are given by
\begin{equation}
    R_{\text{SMD}} = I(\msf{C}^1;\msf{Y}^1) +\sum_{s=2}^L I(\msf{C}^s;\msf{\tilde{Y}}^s|\msf{C}^{1},\ldots,\msf{C}^{s-1}),
\end{equation}
and
\begin{equation}
    R_{\text{PMD}} = I(\msf{C}^1;\msf{Y}^1) +\sum_{s=2}^L I(\msf{C}^s;\msf{Y}^s),
\end{equation}
respectively.

The information rates achievable by the multistage decoder, SMD, and PMD are computed via Monte-Carlo simulation. In Fig.~\ref{fig:sub_decoder}, we provide two examples with two levels where the lattices are generated by $p_1=2$, $p_2 = 3$ and $p_1=2$, $p_2 = 13$, respectively. One observes that for both cases, as expected, the multistage decoder performs the best among these decoders as it is also the most complex one. Also, since we are using the hypercube shaping, the coding scheme suffers from a loss of 1.53 dB in the high SNR regime that corresponds to the shaping gain. On the other hand, although being suboptimal, the SMD can support information rates close to that provided by the multistage decoder, especially in the medium and high SNR regimes. For the PMD, the achievable rates are much worse than the other two in the low SNR regime but it is still of interest in the high SNR regime due to its low complexity and low latency.

\begin{figure}
    \centering
    \includegraphics[width=3.5in]{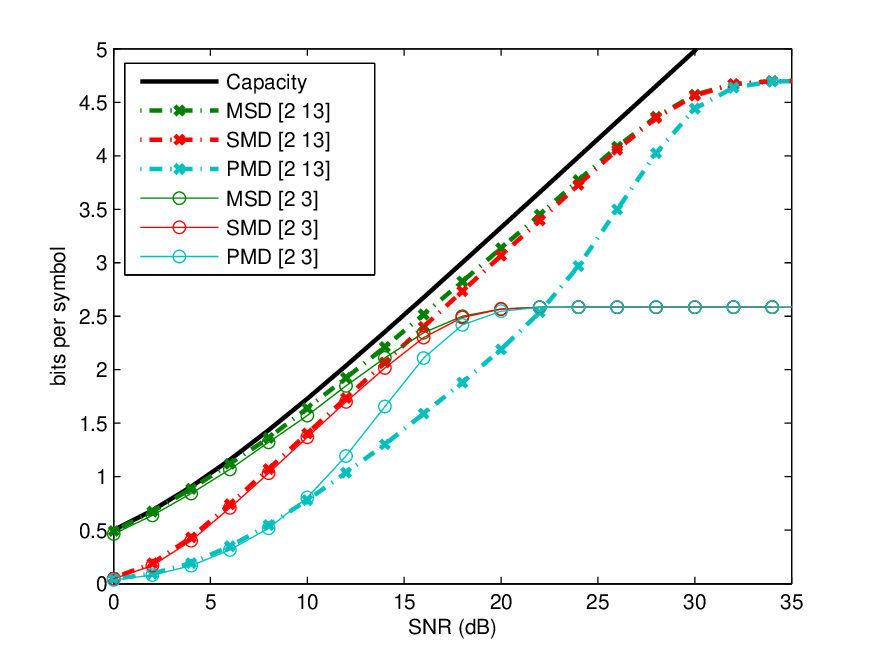}
    \caption{Average achievable rates for constellations with different size.}
    \label{fig:sub_decoder}
\end{figure}


\section{Nested Lattice Codes from Construction $\pi_A$}\label{sec:AWGN}

In this section, we construct multilevel nested lattice codes from Construction $\pi_A$ lattices. Our construction closely follow the one by Ordentlich and Erez \cite{ordentlich_erez_simple} rather than the frequently used one by Erez and Zamir in \cite{erez04}. For codes from the proposed construction, an isomorphism between lattice codewords and messages can be easily identify as detailed in \cite{huangisit14}. Again, we only consider constructing nested lattice codes over $\mbb{Z}$ but the generalizations to $\Zi$ and $\Zw$ are straightforward.

\subsection{Construction and Main Result}
Let $p_1,\ldots,p_L$ be distinct primes and $\mc{M}:\times_{l=1}^L\mbb{F}_{p_l}\rightarrow \mbb{Z}/\Pi_{l=1}^L p_l\mbb{Z}$ be a ring isomorphism. We first generate a pair of nested linear codes $(C^l_f,C^l_c)$ such that $C^l_c\subseteq C^l_f$ for each $l\in\{1,\ldots,L\}$ as follows,
\begin{align}
    C^l_c &= \{\mathbf{G}^l_c\odot\mathbf{w}^l | \mathbf{w}^l\in\mbb{F}_{p_l}^{m^l_c} \}, \\
    C^l_f &= \{\mathbf{G}^l_f\odot\mathbf{w}^l | \mathbf{w}^l\in\mbb{F}_{p_l}^{m^l_f} \},
\end{align}
where $\mathbf{G}^l_c$ is a $N\times m^l_c$ matrix and
\begin{equation}
    \mathbf{G}^l_f = \begin{bmatrix}
                       \mathbf{G}^l_c & \mathbf{\tilde{G}}^l \\
                     \end{bmatrix},
\end{equation}
where $\mathbf{\tilde{G}}^l$ is a $N\times (m^l_f-m^l_c)$ matrix. We then generate (scaled) lattices $\Lambda_f$ and $\Lambda_c$ from Construction $\pi_A$ with the linear codes $C^l_f$ and $C^l_c$, respectively, as follows.
\begin{align}
    \Lambda_f &\triangleq  \gamma\left(\Pi_{l=1}^L p_l\right)^{-1} \mc{M}(C^1_f,\ldots,C^L_f) + \gamma\mbb{Z}^N, \nonumber \\
    \Lambda_c &\triangleq  \gamma\left(\Pi_{l=1}^L p_l\right)^{-1} \mc{M}(C^1_c,\ldots,C^L_c) + \gamma\mbb{Z}^N,
\end{align}
where $\gamma$ is chosen such that the $\sigma^2(\Lambda_c)=P$. Clearly, $\Lambda_c\subseteq\Lambda_f$ and the design rate is given by
\begin{equation}
    R_{\text{design}} = \sum_{l=1}^L\frac{m^l_f-m^l_c}{N}\log(p_l).
\end{equation}
The design rate becomes the actual rate if every $\mathbf{G}^l_f$ is full-rank which will be fulfilled with high probability.


\subsection{Encoding and Decoding}
The transmitter first decomposes its message into $(\mathbf{w}^1,\ldots,\mathbf{w}^L)$, where $\mathbf{w}^l$ is a length $(m_f^l-m_c^l)$ vector over $\mbb{F}_{p_l}$, and bijectively maps it to a lattice point $\mathbf{t}\in\Lambda_f\cap\mc{V}_{\Lambda_c}$ where
\begin{equation}
    \mathbf{t} = \left(\gamma\left(\Pi_{l=1}^L p_l\right)^{-1}\mc{M}(\mathbf{c}^1,\ldots,\mathbf{c}^L) + \gamma \boldsymbol\zeta\right) \hspace{-3pt}\mod\Lambda_c,
\end{equation}
with $\boldsymbol\zeta\in\mbb{Z}^N$ and $\mathbf{c}^l\defeq\mathbf{G}_f^l\odot[\mathbf{0}_{m_c^l}~\mathbf{w}^l]^T$. It then sends a dithered version
\begin{equation}
    \mathbf{x} = (\mathbf{t}-\mathbf{u}) \hspace{-3pt}\mod\Lambda_c.
\end{equation}

Upon receiving $\mathbf{y}$, the receiver scales it by the linear MMSE estimator given by
\begin{equation}
    \alpha \defeq \frac{P}{P+\eta^2},
\end{equation}
and adds the dithers back to form
\begin{align}
    &[\alpha\mathbf{y} + \mathbf{u}] \hspace{-3pt}\mod\Lambda_c = [\mathbf{t}-(1-\alpha)\mathbf{x}+\alpha\mathbf{z}] \hspace{-3pt}\mod\Lambda_c \nonumber \\
    &= [\mathbf{t}+\mathbf{z}_{eq}] \hspace{-3pt}\mod\Lambda_c \nonumber \\
    &= \left[\gamma\left(\Pi_{l=1}^L p_l\right)^{-1} \mc{M}(\mathbf{c}^1_f,\ldots,\mathbf{c}^L_f) + \gamma \boldsymbol\zeta +\mathbf{z}_{eq}\right] \hspace{-3pt}\mod\Lambda_c
\end{align}
where
\begin{equation}
    \mathbf{z}_{eq} \defeq  \alpha\mathbf{z}-(1-\alpha)\mathbf{x}\hspace{-3pt}\mod \Lambda_c,
\end{equation}
with
\begin{align}
    \frac{1}{N}\mbb{E}\|\msf{Z}_{eq}\|^2 &\leq \frac{1}{N}\mbb{E}\|\alpha\msf{Z}-(1-\alpha)\msf{X}\|^2 \nonumber \\
    &= (1-\alpha)^2 P + \alpha^2\eta^2\nonumber \\
    &= \frac{P\eta^2}{P+\eta^2}.
\end{align}
{\black Due to the random dither, $\mathbf{t}$ and $\mathbf{z}_{eq}$ are statistically independent to each other. One can now perform multistage decoding to decode the fine lattice point $\mathbf{t}$ by decoding the equivalent codewords $\mathbf{c}^l$ for $l\in\{1,\ldots,L\}$ level by level.} 

\subsection{Achievable Rate}
Let $\msf{Z}^*_{eq}$ be the i.i.d. Gaussian random vector having distribution $\mc{N}(0,\sigma^2_{eq})$ where $\sigma^2_{eq}\defeq \frac{P\eta^2}{P+\eta^2}$.
{\black The achievable rate of the proposed nested lattice codes is given in the following theorem.
\begin{theorem}
    For the AWGN channel, there exists a sequence of nested lattice codes from the proposed ensemble that can achieve the following rate under multistage decoding,
    \begin{equation}
        R = \frac{1}{2}\log\left(1+\frac{P}{\eta^2}\right) -\frac{1}{2}\log(2\pi e G(\Lambda_c)) + \frac{1}{N} D(\msf{Z}_{eq}||\msf{Z}^*_{eq}),
    \end{equation}
    where $D(.|.)$ is the Kullback-Leibler divergence \cite{cover91}.
\end{theorem}
Before proving this theorem, we discuss the implications of this result. We first note that if $\Lambda_c$ happens to be good for MSE quantization, then $G(\Lambda_c)\rightarrow 1/2\pi e$ and
\begin{equation}
    \frac{1}{N}D(\msf{Z}_{eq}\|\msf{Z}^*_{eq})\rightarrow 0,
\end{equation}
from the main result in \cite{zamir96}. This will imply the existence of capacity-achieving multilevel nested lattice codes under multistage decoding. Unfortunately, we have not been able to prove the existence of such $\Lambda_c$ with our construction\footnote{{\black Our preliminary result in \cite{huangisit14} falsely claims that we can prove the existence of such lattices with our construction. The proof there was wrong mainly because the correlation between codewords induced by the proposed construction prevents direct usage of the arguments in \cite{ordentlich_erez_simple}.}}. On the other extreme, if $\Lambda_c = \gamma \mbb{Z}^N$, that is, hypercube shaping, $G(\Lambda_c)=1/12$, we have
\begin{align}
    R &= \frac{1}{2}\log\left(1+\frac{P}{\eta^2}\right) -\frac{1}{2}\log(\frac{\pi e}{6}) + \frac{1}{N} D(\msf{Z}_{eq}||\msf{Z}^*_{eq}) \nonumber \\
    &\rightarrow\frac{1}{2}\log\left(1+\frac{P}{\eta^2}\right) -\frac{1}{2}\log(\frac{\pi e}{6}),
\end{align}
in the limit as SNR tends to infinity. This can be justified by observing that $\alpha\rightarrow 1$ and thus $\msf{Z}_{eq}\rightarrow\msf{Z}^*_{eq}$ as SNR$\rightarrow\infty$. This result conforms with the 1.53 dB loss in shaping gain in the asymptotically high SNR regime \cite{forney2000}.
}

\begin{IEEEproof}
{\black Lemma~\ref{lma:Vol_fine} in Appendix~\ref{apx:Vol_fine} establishes that there exists a sequence of the proposed lattices whose probability of error under multistage decoding can be made arbitrarily small as $N\rightarrow \infty$ if
\begin{equation}
    \text{Vol}(\Lambda_f)^{\frac{2}{N}} > 2\pi e \sigma_{eq}^2 2^{-\frac{2}{N}D(\msf{Z}_{eq}\|\msf{Z}^*_{eq})}.
\end{equation}
Therefore, there exists a sequence of proposed nested lattice codes with hypercube shaping that can achieve the design rate per real dimension given by
}
\begin{align}
    &R_{\text{design}} = \frac{1}{N}\log\left(\frac{\text{Vol}(\Lambda_c)}{\text{Vol}(\Lambda_f)}\right) \nonumber \\
    &= \frac{1}{N}\log (\text{Vol}(\Lambda_c)) - \frac{1}{N}\log(\text{Vol}(\Lambda_f)) \nonumber \\
    &\overset{N\rightarrow\infty}{\rightarrow} \frac{1}{2}\log\frac{P}{G(\Lambda_c)}-\frac{1}{2}\log2\pi e\sigma_{eq}^2 2^{-\frac{2}{N}D(\msf{Z}_{eq}\|\msf{Z}^*_{eq})} \nonumber \\
    &= \frac{1}{2}\log\left(1+\frac{P}{\eta^2}\right) -\frac{1}{2}\log(2\pi e G(\Lambda_c)) +\frac{1}{N} D(\msf{Z}_{eq}||\msf{Z}^*_{eq}).
\end{align}
Moreover, as mentioned above, with high probability, each $\mathbf{G}_f^l$ is full rank and the design rate becomes the actual rate.
\end{IEEEproof}

\section{Conclusions}\label{sec:conclusion}
In this paper, a subclass of Construction A lattices called Construction $\pi_A$ has been studied. This construction has been shown to be able to produce a sequence of lattices that is good for channel coding under multistage decoding. Inspired by the efficient decoding algorithm for Construction D lattices, two low-complexity decoding algorithms have been proposed and shown offering reasonably good performance in the medium and high SNR regimes. {\black As an important application, Construction $\pi_A$ lattices have been used to construct nested lattice code ensemble that guarantees an isomorphism between lattice codewords and messages. The achievable rate of the proposed multilevel nested lattice codes under multistage decoding has then been analyzed.} A generalization of Construction $\pi_A$ called Construction $\pi_D$ was also studied which substantially enlarges the design space and subsumes Construction A with codes over prime fields, Construction D, and Construction $\pi_A$ as special cases.

\section*{Acknowledgment}
The authors would like to thank Prof. Uri Erez at Tel Aviv University and Dr. Or Ordentlich at Massachusetts Institute of Technology for pointing out an error in the proof of the existence of Construction $\pi_A$ lattices that are good for MSE quantization in an earlier version of this paper.

\appendices

\section{Proof of Theorem~\ref{thm:goodness}}\label{apx:lattice_goodness}
We begin by noting that any lattice $\Lambda$ generated by Construction A can be written as (up to scaling) $\Lambda = \Lambda^* + p\mbb{Z}^N$, where $\Lambda^*$ is a coded level resulting from mapping a $(N,k)$ linear code to $\mbb{F}_p^N$ via a ring isomorphism and $p\mbb{Z}^N\defeq \Lambda'$ can be viewed as an uncoded level. As shown in \cite{forney2000}, one can first reduce the received signal by performing $\hspace{-3pt}\mod\Lambda'$. This will make the equivalent channel a $\Lambda/\Lambda'$ channel. When the underlying linear code is capacity-achieving for the $\Lambda/\Lambda'$ channel, the probability of error for the first level can be made arbitrarily small. Moreover, by choosing $p$ arbitrarily large, the probability that one would decode to a wrong lattice point inside the same coset can be made arbitrarily small. i.e., the probability of error for the second level can be made arbitrarily small. Forney \textit{et al.} in \cite{forney2000} showed the existence of a sequence of lattices that is good for channel coding under the above two conditions.

In the following, we closely follow the steps in \cite{forney2000} to show the existence of lattices that are good for channel coding generated by our construction. Let $p_1, p_2,\ldots, p_L$ be a collection of distinct odd primes. Similar to lattices from Construction A, a Construction $\pi_A$ lattice can be written as $\Lambda = \Lambda^* + \Pi_{l=1}^L p_l \mbb{Z}^N$ where $\Lambda^*$ is obtained from the steps 1) and 2) in Section~\ref{sec:prod_const} and $\Pi_{l=1}^L p_l \mbb{Z}^N \defeq \Lambda'$ is an uncoded level. Similar to \cite{forney2000}, the probability of error in the uncoded level can be made arbitrarily small when we choose $\Pi_{l=1}^L p_l$ sufficiently large. Therefore, one then has to show that the linear code $C^1\times\ldots\times C^L$ over $\mbb{F}_{p_1}\times\ldots\times \mbb{F}_{p_L}$ together with the mapping $\mc{M}$ is capacity-achieving for the $\Lambda/\Lambda'$ channel under multistage decoding.

Now, by the chain rule of mutual information \cite{cover91}, one has that
\begin{align}
    I(\msf{Y};\msf{X}) &= I(\msf{Y};\mc{M}(\msf{C}^1,\ldots,\msf{C}^L)) \nonumber \\
    &= I(\msf{Y};\msf{C}^1,\ldots,\msf{C}^L) = \sum_{l=1}^L I(\msf{Y};\msf{C}^l|\msf{C}^1,\ldots,\msf{C}^{l-1})).
\end{align}
Hence, the only task remained is showing that linear codes over $\mbb{F}_{p_l}$ can achieve the conditional mutual information $I(\msf{Y};\msf{C}^l|\msf{C}^1,\ldots,\msf{C}^{l-1})$. {\black Note that in \cite[Section III.A]{delsarte82}, it is shown that the average error probability $\bar{P}_e^{(N)}$ over the ensemble of random linear codes (form a balanced set) exponentially decays with $N$ for all rates smaller than the capacity if the channel is regular. If we randomly choose one code from this ensemble, by Markov inequality, we have
\begin{equation}
    \Pp(P_e^{(N)}\geq s\bar{P}_e^{(N)}) <\frac{1}{s}\defeq \epsilon,
\end{equation}
where $P_e^{N}$ is the probability of error and $s,\epsilon>0$. This guarantees that by randomly picking a code from this ensemble, with probability $1-\epsilon$, the error probability is not deviated too much from its average which is exponentially decayed in $N$.}

We now follow the proof in \cite{forney2000} and show that the equivalent channel at each level is regular in the sense of Delsarte and Piret \cite{delsarte82}. As restated in \cite{forney2000}, a channel with transition probabilities $\{f(y|b), b\in B, y\in Y\}$ is regular if the input alphabet can be identified with an Abelian group $B$ that acts on the output alphabet $Y$ by permutation. In other words, if a set of permutations $\{\tau_b, b\in B\}$ can be defined such that $\tau_b(\tau_{b'}(y))=\tau_{b\oplus b'}(y)$ for all $b, b'\in B$ and $y\in Y$ such that $f(y|b)$ depends only on $\tau_b(y)$. Note that since we are considering the $\Lambda/\Lambda'$ channel, the additive noise is actually the $\Lambda'$-aliased Gaussian noise given by
\begin{equation}
    f_{\Lambda'}(z) = \sum_{\boldsymbol\lambda\in\Lambda} g_{\eta^2}(z+\boldsymbol\lambda),~~z\in\mbb{R}^N,
\end{equation}
where $g_{\eta^2}(.)$ is the Gaussian density function with zero mean and variance $\eta^2$.

Now, suppose we are at the $l$th level's decoding. i.e., all the codewords in the previous levels have been successfully decoded. The receiver first subtracts out the contribution from the previous levels by $y-\mc{M}(c^1,\ldots,c^{l-1},0,\ldots,0)\hspace{-3pt}\mod\Lambda'$. We show that the equivalent channel seen at the $l$th level's decoding is regular. For $b\in\mbb{F}_{p_l}$ define
\begin{equation}
    \mathbf{b} \defeq \begin{bmatrix}
                        \mc{M}(0,\ldots,0,b,v_1^{l+1},\ldots,v_1^L) \\
                        \mc{M}(0,\ldots,0,b,v_2^{l+1},\ldots,v_2^L) \\
                        \vdots \\
                        \mc{M}(0,\ldots,0,b,v_S^{l+1},\ldots,v_S^L) \\
                      \end{bmatrix},
\end{equation}
where $(v_s^{l+1},\ldots,v_s^L)\in\mbb{F}_{p_{l+1}}\times\ldots\times\mbb{F}_{p_L}$ for $s\in\{1,\ldots,S\}$ and none of these vectors are exactly the same. Therefore, there are total $S=\Pi_{l'>l} p_{l'}$ possibilities. Also, note that the ordering of elements in $\mathbf{b}$ does not matter and can be arbitrarily placed. Thus, given the previously decoded codewords, $\mathbf{b}$ is fully determined by $b$. For $y\in\mbb{R}^N$, let us now define the following,
\begin{align}
    \tau_b(y)&\defeq y-\mathbf{b} \hspace{-3pt}\mod \Lambda' \nonumber \\
    &\defeq\begin{bmatrix}
                        y- \mc{M}(0,\ldots,0,b,v_1^{l+1},\ldots,v_1^L) \hspace{-3pt}\mod\Lambda'\\
                        y- \mc{M}(0,\ldots,0,b,v_2^{l+1},\ldots,v_2^L) \hspace{-3pt}\mod\Lambda' \\
                        \vdots \\
                        y- \mc{M}(0,\ldots,0,b,v_S^{l+1},\ldots,v_S^L) \hspace{-3pt}\mod\Lambda'\\
                      \end{bmatrix}.
\end{align}
One can verify that
\begin{align}
    &\tau_b(\tau_{b'}(y)) = \tau_{b'}(y) - \mathbf{b} \hspace{-3pt}\mod \Lambda' \nonumber \\
    &\overset{(a)}{=} \begin{bmatrix}
                        y- \mc{M}(0,\ldots,0,b'\oplus b,2v_1^{l+1},\ldots,2v_1^L) \hspace{-3pt}\mod\Lambda'\\
                        y- \mc{M}(0,\ldots,0,b'\oplus b,2v_2^{l+1},\ldots,2v_2^L) \hspace{-3pt}\mod\Lambda' \\
                        \vdots \\
                        y- \mc{M}(0,\ldots,0,b'\oplus b,2v_S^{l+1},\ldots,2v_S^L) \hspace{-3pt}\mod\Lambda'\\
                      \end{bmatrix} \nonumber \\
                      &= \begin{bmatrix}
                        y- \mc{M}(0,\ldots,0,b'\oplus b,\tilde{v}_1^{l+1},\ldots,\tilde{v}_1^L) \hspace{-3pt}\mod\Lambda'\\
                        y- \mc{M}(0,\ldots,0,b'\oplus b,\tilde{v}_2^{l+1},\ldots,\tilde{v}_2^L) \hspace{-3pt}\mod\Lambda' \\
                        \vdots \\
                        y- \mc{M}(0,\ldots,0,b'\oplus b,\tilde{v}_S^{l+1},\ldots,\tilde{v}_S^L)\hspace{-3pt}\mod\Lambda'\\
                      \end{bmatrix}, \label{eqn:pi_regular}
\end{align}
where $(\tilde{v}_s^{l+1},\ldots,\tilde{v}_s^L)\in\mbb{F}_{p_{l+1}}\times\ldots\times\mbb{F}_{p_L}$ for $s\in\{1,\ldots,S\}$ and (a) follows from the fact that $\mc{M}$ is an isomorphism. Now, since the mapping from $\mbb{Z}_p$ to $2\odot\mbb{Z}_p$ is bijective for all odd primes $p$, it is clear that none of $(\tilde{v}_s^{l+1},\ldots,\tilde{v}_s^L)$ for $s\in\{1,\ldots,S\}$ are the same so one can rearrange \eqref{eqn:pi_regular} to get $\tau_b(\tau_{b'}(y))=\tau_{b\oplus b'}(y)$.

Let $b\in\mbb{F}_{p_l}$ be transmitted, the transition probability is given by
\begin{align}
    f(y|c^1,\ldots,c^{l-1},b) &\propto  \nonumber \\
    \sum_{(v^{l+1},\ldots,v^L)\in\mbb{F}_{p_{l+1}}\times\ldots\times\mbb{F}_{p_L}} &f_{\Lambda'}(y|c^1,\ldots,c^{l-1},b,v^{l+1},\ldots,v^L),
\end{align}
which only depends on $\tau_b(y)$. Hence the equivalent channel experienced by the $l$th level is regular and linear codes suffice to achieve the mutual information. Repeating this argument to each level shows that multilevel coding and multistage decoding suffice to achieve the capacity.

\section{Lemma~\ref{lma:Vol_fine} and its Proof}\label{apx:Vol_fine}
\begin{lemma}\label{lma:Vol_fine}
    Let $\msf{Z}^*_{eq}$ be the i.i.d. Gaussian random vector having distribution $\mc{N}(0,\sigma^2_{eq})$ where $\sigma^2_{eq}$. There exists a sequence of fine lattices $\Lambda_f$ whose error probability can be made arbitrarily small under multistage decoding whenever
    \begin{equation}
        \text{Vol}(\Lambda)^{\frac{2}{N}}>2\pi e\sigma^2_{eq}2^{-\frac{2}{N}D(\msf{Z}_{eq}\|\msf{Z}^*_{eq})}.
    \end{equation}
\end{lemma}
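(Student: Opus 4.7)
The plan is to reinterpret the hypothesis as a Poltyrev-type condition on the entropy power of $\mathbf{z}_{eq}$ and then invoke the Poltyrev-goodness of $\Lambda_f$ under multistage decoding established in Theorem~\ref{thm:goodness}. First, since $\mathbf{z}^*_{eq}\sim\mc{N}(\mathbf{0},\sigma^2_{eq}\mathbf{I})$ has second moment matching that of $\mathbf{z}_{eq}$ (as the MMSE decomposition yields $\frac{1}{N}\E\|\mathbf{z}_{eq}\|^2=\sigma^2_{eq}$), the standard Gaussian-matching identity gives $D(\mathbf{z}_{eq}\|\mathbf{z}^*_{eq}) = h(\mathbf{z}^*_{eq})-h(\mathbf{z}_{eq})$ with $h(\cdot)$ denoting differential entropy. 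Consequently,
\[
2\pi e\,\sigma^2_{eq}\cdot 2^{-\frac{2}{N}D(\mathbf{z}_{eq}\|\mathbf{z}^*_{eq})} \;=\; 2^{\frac{2}{N}h(\mathbf{z}_{eq})},
\]
and the hypothesis reads $\text{Vol}(\Lambda_f)^{2/N}>2^{2h(\mathbf{z}_{eq})/N}$, i.e., the Voronoi volume of $\Lambda_f$ exceeds $2\pi e$ times the entropy power of the equivalent noise.

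Next, I would extend the random-coding Minkowski--Hlawka argument used in the proof of Theorem~\ref{thm:goodness} (Appendix~\ref{apx:lattice_goodness}) from the Gaussian-noise hypothesis to arbitrary additive noise with controlled differential entropy. At each of the $L$ multistage-decoding levels, the Euclidean ball of volume $(2\pi e\sigma^2)^{N/2}$ that serves as the Gaussian typicality region is replaced by a high-probability typical set for the per-level effective noise, which by the asymptotic equipartition property has volume concentrated around $2^{Nh(\cdot)}$. The ensemble-averaged per-level error probability is then bounded by the ratio of this typical-set volume to $\text{Vol}(\Lambda_f)$, which vanishes as $N\to\infty$ under the reformulated hypothesis; a union bound over the $L$ levels, followed by standard extraction of a deterministic lattice sequence from the averaged ensemble, yields the desired conclusion.

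The main obstacle will be the bookkeeping of differential entropies across the $L$ stages: the effective noise at stage $s$ is a scaled/folded version of $\mathbf{z}_{eq}$ (reduced modulo $p_s$ after subtracting the already-decoded levels), and one must verify that the sum of the per-level differential entropies is controlled by $h(\mathbf{z}_{eq})$, with any slack vanishing with $N$. The ring isomorphism $\mc{M}$ supplies the bijective correspondence underpinning a chain-rule decomposition of the entropy across levels, and this is where most of the technical work would sit. Should this direct route prove cumbersome, an alternative is a change-of-measure estimate that transfers the Gaussian Poltyrev-goodness of Theorem~\ref{thm:goodness} to the non-Gaussian regime by paying precisely the $2^{-\frac{2}{N}D(\mathbf{z}_{eq}\|\mathbf{z}^*_{eq})}$ factor in the effective volume, which is exactly the factor appearing in the statement.
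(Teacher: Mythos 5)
Your plan is not how the paper argues, and it contains a factual slip worth flagging: the Poltyrev-goodness in Theorem~\ref{thm:goodness} is \emph{not} established by a Minkowski--Hlawka / random-coding-over-balls argument, but by Forney's sphere-bound-achieving coset code framework combined with Delsarte--Piret regular channels. The paper's proof of Lemma~\ref{lma:Vol_fine} proceeds along those same lines: it works with the uniform-input capacity $C_{\text{U}}(\Lambda',\msf{Z}_{eq})$ of the $\hspace{-3pt}\mod\Lambda'$ channel, uses the identity $D(\msf{Z}_{eq}\|\msf{Z}^*_{eq})=h(\msf{Z}^*_{eq})-h(\msf{Z}_{eq})$ together with $h(\msf{Z}_{eq}\hspace{-3pt}\mod\Lambda')\leq h(\msf{Z}_{eq})$ to lower-bound this capacity by $C_{\text{U}}(\Lambda',\msf{Z}^*_{eq})+D(\msf{Z}_{eq}\|\msf{Z}^*_{eq})$, invokes Forney's Theorems~3 and~10 to evaluate $C_{\text{U}}(\Lambda',\msf{Z}^*_{eq})$, equates capacity with $\frac{2}{N}\log(\text{Vol}(\mc{V}_{\Lambda'})/\text{Vol}(\mc{V}_{\Lambda}))$ because the codes are capacity-achieving, and then discharges the multistage aspect by the chain rule for mutual information (as in Appendix~\ref{apx:lattice_goodness}).

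That said, your reformulation $2\pi e\,\sigma^2_{eq}\,2^{-\frac{2}{N}D(\msf{Z}_{eq}\|\msf{Z}^*_{eq})}=2^{\frac{2}{N}h(\msf{Z}_{eq})}$ is exactly the right observation and is also the hinge of the paper's argument (it is hiding inside step (a) of \eqref{eqn:capa_achi_1}). Where you diverge is in proposing an AEP/typical-set volume argument applied level-by-level. The obstacle you yourself identify --- accounting for the differential entropies of the per-level folded and rescaled noises and showing they sum to $h(\msf{Z}_{eq})$ up to vanishing slack --- is genuinely delicate, since $\msf{Z}_{eq}$ does not factor across levels and the per-level noises are obtained by successive modulo reductions. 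The paper avoids this bookkeeping entirely by never decomposing the noise entropy across levels: the capacity of the $\hspace{-3pt}\mod\Lambda'$ channel is computed once, and multistage achievability is inherited from the chain-rule decomposition $I(\msf{Y};\msf{C}^1,\ldots,\msf{C}^L)=\sum_l I(\msf{Y};\msf{C}^l\mid\msf{C}^1,\ldots,\msf{C}^{l-1})$. Your change-of-measure alternative is closer in spirit to what the paper does, but the paper implements it at the level of capacities rather than as a volume correction to a Poltyrev bound. One more small imprecision: the paper establishes $\frac{1}{N}\E\|\msf{Z}_{eq}\|^2\leq\sigma^2_{eq}$ (inequality, because of the modulo reduction), not equality as you state; the identity $D(\msf{Z}_{eq}\|\msf{Z}^*_{eq})=h(\msf{Z}^*_{eq})-h(\msf{Z}_{eq})$ requires matched second moments and strictly speaking only becomes accurate in the large-$N$ limit where the coarse lattice is MSE-good and the modulo wrap-around is negligible.
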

\begin{IEEEproof}
Let $\Lambda$ be a lattice generated by Construction $\pi_A$ with primes $p_1$, $\ldots$, $p_L$ and let $\Lambda'$ be a sublattice of $\Lambda$. Define $C_{\text{U}}(\Lambda/\Lambda',\msf{Z_{eq}})$ and $C_{\text{U}}(\Lambda',\msf{Z_{eq}})$ the uniform input capacity of the $\Lambda/\Lambda'$ and $\hspace{-3pt}\mod$-$\Lambda'$ channels \cite{forney2000}, respectively, with noise distribution $\msf{Z}_{eq}$. We denote by $P_e(\Lambda',\msf{Z}_{eq})$ the error probability when using $\Lambda'$ over the channel with additive $\msf{Z}_{eq}$ noise. For a lattice $\Lambda$ and noise variance $\sigma^2_{eq}$, let us also define
\begin{equation}
    \alpha^2(\Lambda,\sigma^2_{eq}) \defeq \frac{\text{Vol}(\mc{V}_{\Lambda})^{\frac{2}{N}}}{2\pi e\sigma^2_{eq}}.
\end{equation}

Similar to \cite[Section V]{forney2000}, we begin with a lattice partition $\Lambda/\Lambda'$ such that
\begin{enumerate}
    \item $C_{\text{U}}(\Lambda/\Lambda',\msf{Z_{eq}})\approx C_{\text{U}}(\Lambda',\msf{Z_{eq}})$,
    \item $\text{Vol}(\mc{V}_{\Lambda'})$ is large enough that $P_e(\Lambda',\msf{Z}_{eq})\approx 0$,
\end{enumerate}
where the second condition is possible because $\msf{Z}_{eq}$ is semi norm-ergodic \cite{ordentlich_erez_simple} and requires $q\rightarrow\infty$.

Recall that $\msf{Z}_{eq}^*$ is a zero-mean Gaussian random vectors having a variance $\sigma_{eq}^2$. Consider the $\hspace{-3pt}\mod \Lambda'$ channel
\begin{equation}
    \mathbf{y}'=[\mathbf{x} + \mathbf{z}_{eq}]\hspace{-3pt}\mod\Lambda'.
\end{equation}
We have the uniform input capacity given by
\begin{align}\label{eqn:capa_achi_1}
    C_{\text{U}}(\Lambda',\msf{Z}_{eq}) &= I(\msf{Y}';\msf{X}) \nonumber \\
    &\overset{(a)}{=} \log\left(\text{Vol}(\mc{V}_{\Lambda'})\right)-h(\msf{Z}_{eq}\hspace{-3pt}\mod\Lambda') \nonumber \\
    &\geq \log\left(\text{Vol}(\mc{V}_{\Lambda'})\right)-h(\msf{Z}_{eq}) \nonumber \\
    &\overset{(b)}{=} \log\left(\text{Vol}(\mc{V}_{\Lambda'})\right)-h(\msf{Z}_{eq}^*) + D(\msf{Z}_{eq}||\msf{Z}_{eq}^*) \nonumber \\
    &= C_{\text{U}}(\Lambda',\msf{Z}_{eq}^*) + D(\msf{Z}_{eq}||\msf{Z}_{eq}^*) \nonumber \\
    &\overset{(c)}{\approx} \frac{N}{2}\log\alpha^2(\Lambda',\sigma_{eq}^2)+D(\msf{Z}_{eq}||\msf{Z}_{eq}^*),
\end{align}
where (a) follows from the crypto lemma, (b) is due to the fact that $D(\msf{Z}_{eq}||\msf{Z}_{eq}^*) = h(\msf{Z}_{eq}^*)-h(\msf{Z}_{eq})$ \cite{cover91},
and (c) is from \cite[Theorem 3 and Theorem 10]{forney2000} that $C_{\text{U}}(\Lambda',\msf{Z}_{eq}^*)=C(\Lambda',\msf{Z}_{eq}^*)\approx \frac{N}{2}\log\alpha^2(\Lambda',\sigma_{eq}^2)$ the true capacity of the $\hspace{-3pt}\mod\Lambda'$ channel with noise $\msf{Z}_{eq}^*$.

By the first assumption above, one has
\begin{align}\label{eqn:capa_achi_2}
    \frac{2}{N}C_{\text{U}}(\Lambda',\msf{Z_{eq}})&\approx \frac{2}{N}C_{\text{U}}(\Lambda/\Lambda',\msf{Z_{eq}}) \nonumber \\
    &\overset{(c)}{\approx} \frac{2}{N}\log\left(\frac{\text{Vol}(\mc{V}_{\Lambda'})}{\text{Vol}(\mc{V}_{\Lambda})}\right) \nonumber \\
    &= \log\alpha^2(\Lambda',\sigma_{eq}^2) - \log\alpha^2(\Lambda,\sigma_{eq}^2),
\end{align}
where (c) is because the underlying linear codes are capacity-achieving. Combining \eqref{eqn:capa_achi_1} and \eqref{eqn:capa_achi_2} results in
\begin{equation}
    \text{Vol}(\mc{V}_{\Lambda})^{\frac{2}{N}} \approx 2\pi e\sigma^2_{eq}2^{-\frac{2}{N}D(\msf{Z}_{eq}||\msf{Z}^*_{eq})}.
\end{equation}

The error probability can be union bounded as
\begin{equation}
    \Pp(\text{errors in the coded levels}) + \Pp(\text{errors in the uncoded level}),
\end{equation}
which, similar to \cite{forney2000}, can be made arbitrarily small since the code is capacity-achieving and $\text{Vol}(\mc{V}_{\Lambda'})$ is large enough to avoid errors in the uncoded level. Moreover, similar to Appendix~\ref{apx:lattice_goodness}, one can use the chain rule to show that $C_\text{U}(\Lambda',\msf{Z}_{eq})$ can be achieved with multilevel coding and multistage decoding.
\end{IEEEproof}

\end{document}